\newcommand{\Ord}[1]{\mathcal{O}\left( #1 \right)}
\newtheorem{lemma}{Lemma}
\newtheorem{theorem}{Theorem}
\newtheorem{corollary}{Corollary}
\newtheorem{defn}{Definition}
\newtheorem{prob}{Problem}
\newtheorem{prob*}{Problem*}
\def\be{\begin{eqnarray}}
\def\ee{\end{eqnarray}}
\definecolor{Pr}{rgb}{0.4,0.3,0.9}
\date{\today}
\begin{abstract}
Measuring properties of quantum systems is a fundamental problem in quantum mechanics.
We provide a simple method for estimating the expectation value of observables with an unknown quantum state.
The idea is to use a data structure to sample the terms of observables based on the Pauli decomposition proportionally to their importance.
We call this technique operator shadow as a shorthand for the procedure of preparing a sketch of an operator to estimate properties. 
Only when the numbers of observables are small for multiple local observables, the sample complexity of this method is better than the classical shadow technique.
However, if 
we want to estimate the expectation value of a linear combination of local observables, for example the energy of a local Hamiltonian, the sample complexity is better on all parameters. The time complexity to construct the data structure is $2^{O(k)}$ for $k$-local observables, similar to the post-processing time of classical shadows. 
\end{abstract}
\begin{document}
\title{Estimating properties of a quantum state by importance-sampled operator shadows}

\author{Naixu Guo}
\email{naixug@u.nus.edu}
\affiliation{Centre for Quantum Technologies, National University of Singapore, 117543, Singapore}

\author{Feng Pan}
\email{feng\_pan@nus.edu.sg}
\affiliation{Centre for Quantum Technologies, National University of Singapore, 117543, Singapore}

\author{Patrick Rebentrost}
\email{cqtfpr@nus.edu.sg}
\affiliation{Centre for Quantum Technologies, National University of Singapore, 117543, Singapore}
\affiliation{Department of Computer Science, School of Computing, National University of Singapore, 117417, Singapore}
\maketitle
\section{Introduction}

Closed quantum systems are essentially black boxes.
We may only extract classical information via measurements, which leads to the collapse of quantum states.
To recover a full description of an unknown $n$-qubit quantum state, in general exponentially many copies of the quantum state are needed \cite{BRU1999249}.
Standard techniques of quantum state tomography are reviewed in \cite{dariano2003quantum, ZhuThesis2012, wrightHowLearnQuantum}. In the work \cite{Flammia_2012}, authors considered compressed sensing which extracts a sketch of a quantum state from a few randomized measurements. 
O’Donnell and Wright \cite{10.1145/2897518.2897544}, Haah \textit{et al.} \cite{10.1145/2897518.2897585} respectively improved algorithms for the trace distance and fidelity settings.
Generalizing previous quantum-state PAC-learning \cite{Aaronson_2007}, the online learning setting has also been considered \cite{Aaronson_2019, chen2022adaptive, chen2020practical}.

For many practical applications, we may not need the full description of a quantum state. 
Aaronson considered a different setting called \textit{shadow tomography} \cite{10.1145/3188745.3188802}.
Given $M$ two-outcome measurements $\{E_1,\dots, E_m\}$ and many copies of an unknown quantum state $\rho$, it suffices to use only $\mathcal{O}(\log M)$ copies estimating each $\mathrm{Tr}[E_i\rho]$ with bounded precision.
Later, Huang \textit{et al.} generalized this problem to predict expectation values of multiple observables and had a great impact both theoretically and experimentally \cite{huang_predicting_2020, Huang2022exp}.
The key idea is to construct a classical sketch of the quantum state from the outcomes of random measurements.
They provided a detailed discussion for random Clifford and Pauli measurement settings.
Recently, Bertoni \textit{et al.} generalized to the random shallow circuit setting \cite{bertoni2023shallow}. 
Some physically oriented problems like the ground state prediction and learning phases of matter have also been explored \cite{Huang2022phaselearning, lewis2023improved}.

In this paper, we provide a straightforward method for estimating expectation values of observables based on Pauli measurements. 
We call the method \textit{importance-sampled operator shadows}. Analogous to the shadow of a quantum state cast by (randomized)  measurements, we consider the shadow of an operator cast by importance sampling of its constituents.
The method samples the Pauli measurements according to the size of the given coefficients in an $\ell_1$-sampling sense and estimates the desired expectation values.
Similar strategies have been explored in settings like Hamiltonian simulation, variational quantum algorithms, and direct fidelity estimation \cite{, PhysRevLett.129.030503, arrasmith2020operator, PhysRevLett.123.070503, PhysRevLett.106.230501}.
Especially, we improve on earlier work on direct fidelity estimation with $\ell_2$ sampling in Ref. \cite{PhysRevLett.106.230501}.
We provide a comprehensive analysis of our method, both for sample and time complexity, and make comparisons with the classical shadow method \cite{huang_predicting_2020}.
Numerical simulations are shown to corroborate our theoretical findings. For multiple local observables, the sample complexity of this method scales better than the classical shadow in other parameters (such as norms) yet exponentially worse on the number of observables.
However, if we want to estimate expectation values for the linear combination of local observables, the sample complexity of this method is better on all parameters. 

\section{Preliminaries}
For a matrix $A\in \mathbb{C}^{N\times N}$, let $a_{ij}$ be the $(i,j)$-element of $A$.
In this work, $\|A\|_F=\sqrt{\sum_i\sum_j |a_{ij}|^2}$ is the Frobenius norm (Hilbert-Schmidt norm).
Also, $\|A\|_{\infty}=\max_i \sigma_i(A)$ is the spectral norm, where $\sigma_i(A)$ are singular values of $A$.
For a vector $h\in \mathbb{R}^N$, we use $\|h\|_p$ to represent its $\ell_p$ norm.
A standard norm identity is
\be
\|h\|_2\leq \|h\|_1\leq \sqrt{N}\|h\|_2 \label{eqL1}.
\ee
Another simple norm inequality is the following, and the proof is described in Appendix \ref{appen.norm}.
\begin{lemma} \label{lemmaNorm}
For Hermitian matrix $A \in \mathbb C^{N\times N}$, we have that $\|A\|_F\leq \sqrt{N} \|A\|_{\infty}$.
\end{lemma}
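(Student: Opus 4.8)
The plan is to express both norms through the spectral data of $A$ and then apply an elementary bound. Since $A$ is Hermitian, the spectral theorem guarantees an orthonormal eigenbasis with real eigenvalues $\lambda_1,\dots,\lambda_N$. The first step I would take is to rewrite the Frobenius norm as a trace: $\|A\|_F^2 = \sum_i\sum_j |a_{ij}|^2 = \mathrm{Tr}[A^\dagger A] = \mathrm{Tr}[A^2]$, using $A^\dagger = A$. Because the trace is basis-independent and equals the sum of the eigenvalues of $A^2$, this yields $\|A\|_F^2 = \sum_{i=1}^N \lambda_i^2$.

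Next I would relate the spectral norm to the same eigenvalues. For a Hermitian matrix the singular values coincide with the absolute values of the eigenvalues, $\sigma_i(A) = |\lambda_i|$, so that $\|A\|_{\infty} = \max_i |\lambda_i|$ and hence $\|A\|_{\infty}^2 = \max_i \lambda_i^2$. The inequality then follows from bounding each summand by the largest one: $\sum_{i=1}^N \lambda_i^2 \le N \max_i \lambda_i^2 = N \|A\|_{\infty}^2$. Taking square roots gives $\|A\|_F \le \sqrt{N}\,\|A\|_{\infty}$, as claimed.

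The argument involves no real obstacle; it is a direct consequence of the spectral decomposition together with the observation that a sum of $N$ nonnegative terms is at most $N$ times the maximum term. The only point meriting any care is the identification $\sigma_i(A) = |\lambda_i|$, which is precisely where the Hermitian hypothesis enters. For a general (non-Hermitian) matrix one would instead invoke $\|A\|_F^2 = \sum_i \sigma_i^2$ directly and bound by $N \max_i \sigma_i^2$, so the same inequality in fact holds without Hermiticity; the assumption here is a convenience that lets us phrase everything in terms of real eigenvalues rather than a genuine necessity.
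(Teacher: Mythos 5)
Your proof is correct and follows essentially the same route as the paper: both reduce the Frobenius norm to a sum of $N$ squared spectral quantities and bound that sum by $N$ times its largest term. The paper works directly with singular values via $\|A\|_F^2=\sum_i\sigma_i(A)^2$ (the very argument you sketch in your closing remark), while you pass through the eigenvalues using Hermiticity --- a cosmetic difference only.
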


\section{Importance-sampled operator shadow}

For a given $2^n$-dimensional observable $O$, we consider its Pauli decomposition, i.e.,
\be
O = \sum_{i=1}^{L} h_i P_i, \label{eqPauli}
\ee
where $P_i$ is a Pauli string over $n$ qubits, and $h_i$ are corresponding coefficients. 
Since observables are all Hermitians, we have $h_i\in \mathbb{R}$ for all $i\in [L]$.
In general, the decomposition is not efficient, i.e., $L = 4^n$.
In terms of the locality of the operator, we use the definition of locality that was also used in, e.g., Ref.~\cite{huang_predicting_2020}. 
Operator $O$ is called $k$-local if it acts nontrivially on at most $k$ qubits. Then, a $k$-local observable is decomposed into at most $L=4^k$ Pauli terms.
Under this definition, for example, the Hamiltonian for an $n$-spin Ising model with a complete graph would be $n$-local.
In addition, the Hamiltonian for an $n$-spin Ising model with 2D nearest neighbor couplings would also be $n$-local. However, by terminology used elsewhere a $k$-local Hamiltonian means a Hamiltonian consisting of terms that each act on $k$ qubits.

Assume we are given multiple copies of an unknown quantum state $\rho$.
We would like to estimate the expectation value by estimating expectation values of Pauli operators instead, i.e., 
\be \mathrm{Tr}[O\rho]
=\sum_{i=1}^L h_i\mathrm{Tr}[P_i \rho].
\ee 

A straightforward method to measure the expectation value is as follows.
Provide an estimate $z_i$ for each $\mathrm{Tr}[P_i \rho]$ to additive accuracy $\epsilon/\Vert h\Vert_1$ using a number of copies of $\Ord{\Vert h\Vert_1^2/\epsilon^2}$, see Lemma \ref{lemmaPauliMeasurement} below. Hence, this estimates $\mathrm{Tr}[O\rho]$ with additive error $\epsilon$, 
as seen from $\vert \mathrm{Tr}[O\rho]-\sum_{i=1}^L h_i z_i\vert \leq \sum_{i=1}^L \vert h_i\vert\ \vert \mathrm{Tr}[P_i \rho] - z_i \vert \leq \epsilon$. In total, we require $\Ord{L \Vert h\Vert_1^2/\epsilon^2}$ copies.
We show the improvement due to importance sampling.  
We first define sampling access inspired by Ref.~\cite{10.1145/1039488.1039494} where it was used in a singular value decomposition context. 
Note that dual access model (utilizing query and sample accesses) has been investigated in classical distribution testing like Ref.~\cite{canonne2014testing}.

\begin{defn}[$\ell_1$-sampling access for $O$]\label{defSamplingL1}
Define the sampling access such that we are able to sample an index $i \in [L]$ with probability 
\be
W_i :=\frac{ \vert h_i\vert  }{\Vert h \Vert_1 }.
\ee
\end{defn} 
For the expectation value estimation problem, we start with a perfect query model for each of the $\mathrm{Tr}[P_i \rho]$. 
The lemma is essentially a corollary of the results of \cite{10.1145/1039488.1039494,10.1145/3313276.3316310, gilyen2018quantuminspired, chia2018quantuminspired}.

\begin{lemma}[Perfect $\ell_1$-sampling] \label{lemmaSamplingL1}
Assume sampling access as in Definition \ref{defSamplingL1}. Also assume query access to $h_i$ and  $\mathrm{Tr}[P_i\rho]$ for all $i \in [L]$. In addition, assume knowledge of the norm $ \Vert h \Vert_1$. Let $\epsilon>0$.
Then we can estimate 
$\mathrm{Tr}[O\rho]$ to additive accuracy $  \|h\|_1 \epsilon$ in $\Ord{1/\epsilon^2}$ samples and queries with constant success probability. 
\end{lemma}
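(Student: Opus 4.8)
The plan is to build a single-shot unbiased estimator of $\mathrm{Tr}[O\rho]$ out of the sampling and query accesses, and then reduce its variance by averaging independent copies. Using the $\ell_1$-sampling access of Definition \ref{defSamplingL1}, I would draw an index $i\in[L]$ with probability $W_i=|h_i|/\|h\|_1$, query the corresponding values $h_i$ and $\mathrm{Tr}[P_i\rho]$, and form the random variable
\[
X \;=\; \|h\|_1\,\mathrm{sign}(h_i)\,\mathrm{Tr}[P_i\rho].
\]
The choice of weights is exactly what makes this unbiased. Since the coefficients are real ($O$ is Hermitian) we have $\mathrm{sign}(h_i)=h_i/|h_i|$, so $\E[X]=\sum_{i=1}^L W_i\,\|h\|_1\,(h_i/|h_i|)\,\mathrm{Tr}[P_i\rho]=\sum_{i=1}^L h_i\,\mathrm{Tr}[P_i\rho]=\mathrm{Tr}[O\rho]$, recovering the target.

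The key structural fact I would exploit is that each $P_i$ is a Pauli string, hence has eigenvalues $\pm1$, so $|\mathrm{Tr}[P_i\rho]|\le 1$ for any state $\rho$. This immediately gives the deterministic bound $|X|\le \|h\|_1$, and therefore $\mathrm{Var}[X]\le \E[X^2]\le \|h\|_1^2$. The point worth stressing is that the variance is controlled by $\|h\|_1^2$ uniformly, with no dependence on $L$ nor on how unevenly the coefficients $h_i$ are distributed; this is precisely the payoff of sampling proportionally to $|h_i|$ rather than uniformly.

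To finish, I would take $T$ independent copies $X^{(1)},\dots,X^{(T)}$ and average them, $\bar X=\frac1T\sum_{t=1}^T X^{(t)}$, so that $\E[\bar X]=\mathrm{Tr}[O\rho]$ and $\mathrm{Var}[\bar X]\le \|h\|_1^2/T$. Chebyshev's inequality then yields $\Pr\!\left[\,|\bar X-\mathrm{Tr}[O\rho]|\ge \|h\|_1\epsilon\,\right]\le \mathrm{Var}[\bar X]/(\|h\|_1^2\epsilon^2)\le 1/(T\epsilon^2)$. Choosing $T=\Ord{1/\epsilon^2}$ (for instance $T=3/\epsilon^2$) drives the failure probability below a constant, giving additive accuracy $\|h\|_1\epsilon$ with constant success probability, and each of the $T$ rounds uses exactly one call to the sampling oracle plus two queries, for a total of $\Ord{1/\epsilon^2}$ samples and queries as claimed.

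I expect no serious obstacle: the argument is a textbook importance-sampling plus Chebyshev estimate. The only step requiring genuine care is the variance bound, which hinges entirely on the $\pm1$ spectrum of Pauli operators forcing $|\mathrm{Tr}[P_i\rho]|\le 1$. If one later wanted to boost the constant success probability to $1-\delta$, the natural route is a median-of-means amplification introducing the usual $\Ord{\log(1/\delta)}$ factor, but this is not needed for the stated constant-probability guarantee.
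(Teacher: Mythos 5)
Your proposal is correct and follows essentially the same route as the paper: the identical importance-sampled estimator $X=\mathrm{sign}(h_i)\,\mathrm{Tr}[P_i\rho]\,\|h\|_1$, the same unbiasedness computation, the variance bound $\mathbb{V}[X]\le\|h\|_1^2$ resting on $|\mathrm{Tr}[P_i\rho]|\le 1$, and Chebyshev with $T=\Ord{1/\epsilon^2}$ repetitions. The only cosmetic difference is that you obtain the variance bound from the deterministic estimate $|X|\le\|h\|_1$, whereas the paper bounds the sum $\|h\|_1\sum_i|h_i|(\mathrm{Tr}[P_i\rho])^2$ directly; these are the same argument.
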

\begin{proof}
The algorithm is as follows.
We form the random variable 
\begin{align}
X:= \mathrm{sign}(h_i)\mathrm{Tr}[P_i\rho]\|h\|_1,
\end{align}
where the index $i \in [L]$ is sampled according to $W_i$, and query the respective elements.
Considering the randomness of $i$, the expectation value is 
\begin{align}
\mathbb{E}[X]=\sum_{i=1}^L W_i\mathrm{sign}(h_i)\mathrm{Tr}[P_i\rho]\|h\|_1=\mathrm{Tr}[O\rho].
\end{align}
The variance is bounded as 
\be
\mathbb{V}[X]&\leq& \sum_{i=1}^L W_i \left(\mathrm{sign}(h_i)\mathrm{Tr}[P_i\rho] \|h\|_1\right)^2
=\sum_{i=1}^L \frac{|h_i|}{\|h\|_1}(\mathrm{Tr}[P_i\rho] \|h\|_1)^2
= \|h\|_1 \sum_{i=1}^L |h_i| (\mathrm{Tr}[P_i\rho])^2 \nonumber\\
&\leq& \|h\|_1^2.
\ee
Consider performing the sampling independently $T$ times. 
Let $S:=\frac{1}{t}\sum_{t=1}^T X_t$.
By Chebyshev's inequality, 
\begin{align}
\mathbb{P}[|S-\mathbb{E}[S]|\geq \|h\|_1 \epsilon ]\leq \frac{1}{\epsilon^2 T}.
\end{align}
To obtain a constant failure probability (say $0.01$), it suffices to take $T\in\mathcal{O}(100/\epsilon^2)$.
The dependence on the failure probability can be exponentially improved via median boosting \cite{10.5555/646239.683352}.
\end{proof}

Similar results can also be found in Ref.~\cite{PhysRevLett.127.200501, Wu2023overlappedgrouping}.
In the physical situation, the perfect query model is an unrealistic assumption. Multiple runs of the experiment measuring a certain operator $P_i$ return the expectation value only with certain accuracy. We can nevertheless hope to obtain an efficient estimate of the expectation value 
$\mathrm{Tr}[O\rho]$.
The error of each Pauli measurement is given by the following straightforward lemma.
\begin{lemma} \label{lemmaPauliMeasurement}
Let $\epsilon>0$ and $P$ be any Pauli operator over $n$ qubits. Let multiple copies of an arbitrary $n$-qubit quantum state $\rho$ be given. The expectation value $\mathrm{Tr}[P\rho]$ can be determined to additive accuracy $\epsilon$ using $\Ord{1/\epsilon^2}$ copies of $\rho$. 
\end{lemma}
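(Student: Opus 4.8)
The plan is to exploit the fact that a Pauli operator is a Hermitian involution, so (apart from the trivial case $P=I^{\otimes n}$, where $\mathrm{Tr}[P\rho]=1$) its eigenvalues are $\pm 1$. A single projective measurement of $P$ on one copy of $\rho$ is therefore a bounded, $\pm 1$-valued random variable, and the whole statement reduces to a textbook sample-mean estimate with a variance bounded by a constant. First I would set up the measurement: writing $P=\Pi_+-\Pi_-$ with $\Pi_\pm$ the projectors onto the $\pm 1$ eigenspaces, measuring $P$ returns outcome $+1$ with probability $p_+=\mathrm{Tr}[\Pi_+\rho]$ and $-1$ with probability $p_-=\mathrm{Tr}[\Pi_-\rho]$. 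Physically this is realized by rotating each qubit into the eigenbasis of its single-qubit Pauli factor, measuring in the computational basis, and multiplying the $\pm 1$ outcomes over the qubits on which $P$ acts nontrivially. Let $Y$ denote the resulting outcome.

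Second, I would record the first two moments. By construction the estimator is unbiased, $\mathbb{E}[Y]=p_+-p_-=\mathrm{Tr}[P\rho]$, and since $Y\in\{+1,-1\}$ we have $Y^2=1$, so $\mathbb{V}[Y]=1-(\mathrm{Tr}[P\rho])^2\leq 1$. Repeating the measurement on $T$ independent copies yields i.i.d.\ outcomes $Y_1,\dots,Y_T$, and the sample mean $\bar Y=\frac{1}{T}\sum_{t=1}^T Y_t$ satisfies $\mathbb{E}[\bar Y]=\mathrm{Tr}[P\rho]$ and $\mathbb{V}[\bar Y]\leq 1/T$. Chebyshev's inequality then gives $\mathbb{P}[|\bar Y-\mathrm{Tr}[P\rho]|\geq \epsilon]\leq 1/(T\epsilon^2)$, so taking $T\in\Ord{1/\epsilon^2}$ pushes the failure probability below any fixed constant, which is exactly the claimed copy complexity.

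There is essentially no hard step here, mirroring the Chebyshev argument already used in the proof of Lemma \ref{lemmaSamplingL1}; the only points needing a word of care are handling the identity Pauli separately and confirming that the physical measurement genuinely realizes the $\pm 1$-valued variable $Y$ with the correct expectation. If an exponentially small failure probability is desired rather than a constant one, I would upgrade Chebyshev to Hoeffding's inequality (legitimate since $Y$ is bounded) or apply median-of-means boosting, at the cost of only a logarithmic factor in the number of copies.
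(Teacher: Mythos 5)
Your proof is correct and follows essentially the same route as the paper: repeated projective measurements of $P$ yielding $\pm 1$ outcomes, an unbiased sample-mean estimator with variance bounded by a constant, and Chebyshev's inequality giving $\Ord{1/\epsilon^2}$ copies. The only cosmetic difference is that you work directly with the $\pm 1$-valued variable while the paper reparametrizes via the Bernoulli variable $Z^{(j)}\in\{0,1\}$ and the identity $\mathrm{Tr}[P\rho]=2p-1$; these are affinely equivalent, and your version even sidesteps the paper's implicit factor-of-two between the accuracy on $p$ and on $\mathrm{Tr}[P\rho]$.
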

\begin{proof}
A single measurement of $P$ obtains the outcome $\pm 1$. We have $\mathrm{Tr}[P\rho] = 2 p -1$, where $p$ is the probability of measuring $+1$. To estimate this probability we perform a simple Bernoulli test. We define a random variable $Z^{(j)}$ for the $j$-th measurement, where $Z^{(j)}=1$ if the outcome is positive and $Z^{(j)}=0$ if the outcome is negative.
Suppose we measure $M$ times independently.
Let $Z=\sum_{j=1}^M Z^{(j)}/M$.
This statistic has expectation value $\mathbb{E}[Z] = p$ and variance $\mathbb{V}[ Z] = p(1-p)/M$. 
By Chebyshev's inequality, we have 
$\mathbb{P}\left [ Z - \mathbb{E}[Z] \vert \geq \epsilon\right] \leq \frac{ \mathbb{V}[Z] }{\epsilon^2} = \frac{ p(1-p) }{M \epsilon^2}$. For a constant failure probability $\delta$, we require at least $ M \geq p(1-p)/(\delta \epsilon^2) =(1 - \vert \mathrm{Tr}[P\rho] \vert^2)/(\delta \epsilon^2)$.
Assuming the worst case $\vert \mathrm{Tr}[P\rho] \vert =0$ and since there are no other constraints on $M$, we have that $ M \in \Ord{1/\epsilon^2}$.
\end{proof}
Combining these lemmas leads to the following result. 

\begin{theorem}[$\ell_1$-sampling operator shadow]\label{theoremL1}
Assume sampling access as in Definition \ref{defSamplingL1}.
Also, assume query
access to $h_i$ for all $i\in [L]$ and knowledge of the norm $ \Vert h \Vert_1$.
Let there be given multiple copies of a quantum state $\rho$. Let $\epsilon>0$.
Then there is an algorithm that can estimate
$\mathrm{Tr}[O\rho]$ to additive accuracy $\epsilon$ with $ \Ord{\|h\|_1^2/\epsilon^2}$ measurements with constant success probability.
\end{theorem}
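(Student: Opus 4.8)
The plan is to fold the index sampling of Lemma~\ref{lemmaSamplingL1} together with the physical measurement into a single unbiased estimator, rather than first pinning down each $\mathrm{Tr}[P_i\rho]$ to high accuracy via Lemma~\ref{lemmaPauliMeasurement} and then averaging. The crucial observation is that a \emph{single} measurement of a Pauli operator already gives an unbiased estimate of its expectation value, and because each outcome is $\pm 1$, its second moment is exactly one. This lets us bypass a nested, two-level error budget (one for the sampling, one for the measurement precision), which would complicate the bookkeeping and risk an extra multiplicative factor in the sample count.

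Concretely, I would define the estimator
\be
Y := \mathrm{sign}(h_i)\, b\, \|h\|_1,
\ee
where the index $i\in[L]$ is drawn according to $W_i = |h_i|/\|h\|_1$ using the sampling access of Definition~\ref{defSamplingL1}, and $b\in\{+1,-1\}$ is the outcome of a single measurement of $P_i$ on one fresh copy of $\rho$. The first step is to verify unbiasedness. Conditioning on the drawn index, the single-shot outcome satisfies $\mathbb{E}[b\mid i]=\mathrm{Tr}[P_i\rho]$, so averaging over $i$ reproduces the computation in Lemma~\ref{lemmaSamplingL1} and yields $\mathbb{E}[Y]=\sum_{i} W_i\, \mathrm{sign}(h_i)\,\mathrm{Tr}[P_i\rho]\,\|h\|_1 = \mathrm{Tr}[O\rho]$.

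The second step is the variance bound, which is where the one-shot trick pays off. Since $b^2=1$ deterministically, we have $\mathbb{V}[Y]\leq \mathbb{E}[Y^2]=\sum_{i} W_i\, \mathbb{E}[b^2\mid i]\,\|h\|_1^2 = \|h\|_1^2$. This matches the variance bound in Lemma~\ref{lemmaSamplingL1} even though the exact value $\mathrm{Tr}[P_i\rho]$ has been replaced by a noisy single-shot outcome; the replacement does not inflate the variance because $|\mathrm{Tr}[P_i\rho]|\leq 1$, so the noisy second moment is no larger than the clean one. The final step is to draw $T$ independent copies of $Y$, form the empirical mean $\bar Y = \frac{1}{T}\sum_{t=1}^T Y_t$, and apply Chebyshev's inequality to get $\mathbb{P}[|\bar Y - \mathrm{Tr}[O\rho]|\geq \epsilon]\leq \|h\|_1^2/(T\epsilon^2)$. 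Choosing $T\in\Ord{\|h\|_1^2/\epsilon^2}$ then gives constant failure probability, and since each sample consumes exactly one copy of $\rho$, this is also the total measurement count claimed in the theorem.

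The hard part is conceptual rather than technical: recognizing that one should \emph{not} separately estimate each Pauli expectation to accuracy $\epsilon/\|h\|_1$ before combining. The one-shot estimator is precisely what makes the variance collapse to $\|h\|_1^2$ in a single step and deliver the stated scaling; any approach allocating a distinct error budget to the per-Pauli measurement accuracy would, at best, recover the same bound through a more cumbersome argument. For completeness I would note, as in Lemma~\ref{lemmaSamplingL1}, that the dependence on the failure probability can be improved exponentially by median boosting.
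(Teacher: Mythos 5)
Your proposal is correct and takes essentially the same route as the paper: the paper's estimator is $\tilde{X}=\mathrm{sign}(h_i)(2Z_i-1)\|h\|_1$ with $Z_i$ the mean of a \emph{constant} number $M$ of shots of $P_i$ (e.g., $M=4$), and your $Y$ is exactly this estimator specialized to $M=1$, with the same importance-sampled unbiasedness computation and the same Chebyshev-plus-median-boosting conclusion. The only difference is cosmetic: using a single shot makes $b^2=1$ and collapses the variance bound to exactly $\|h\|_1^2$, whereas the paper carries the slightly messier bound $\|h\|_1^2(4/M+5)$ before setting $M\in\mathcal{O}(1)$.
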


\begin{proof}
The algorithm is given in Algorithm \ref{algorithmL1}. Evaluate the correctness and run time as follows. Define the random variable 
\begin{align}
\tilde{X}:=\mathrm{sign}(h_i)\left(2Z_i-1\right)\|h\|_1,
\end{align}
where the index $i \in [L]$ is sampled according to $W_i$.
The definition of $Z_i$ is the statistic $Z$ of Lemma \ref{lemmaPauliMeasurement} for operator $P_i$. 
Considering the randomness of $i$ and the randomness of $Z_i$, we have 
\begin{align}
\mathbb{E}[\tilde{X}]=\sum_{i=1}^L W_i \mathrm{sign}(h_i)(2\mathbb{E}[Z_i]-1) \|h\|_1= \sum_{i=1}^L W_i \mathrm{sign}(h_i)\mathrm{Tr}[P_i\rho] \|h\|_1=\mathrm{Tr}[O\rho].
\end{align}
Note that 
\be
\mathbb{E}[Z_i^2]&=&\mathbb{V}[Z_i]+\mathbb{E}[Z_i]^2=\frac{p_i}{M}+\left(1-\frac{1}{M}\right)p_i^2\nonumber\\
&=&\frac{1+\mathrm{Tr}[P_i\rho]}{2M}+\left(1-\frac{1}{M}\right)\left(\frac{1+\mathrm{Tr}[P_i\rho]}{2}\right)^2.
\ee
The total variance can be bounded as (using $(a-b)^2\leq a^2+b^2$ for $a,b\geq 0$)
\be
\mathbb{V}[\tilde{X}]\leq \|h\|_1^2\left(4\mathbb{E}\left[Z_i^2\right]+1\right)
&=&\|h\|_1^2 \left(4\sum_{i=1}^L W_i \left(\frac{1+\mathrm{Tr}[P_i\rho]}{2M}+\left(1-\frac{1}{M}\right)\left(\frac{1+\mathrm{Tr}[P_i\rho]}{2}\right)^2\right)+1\right)\nonumber \\
&\leq& \|h\|_1^2  \left(4\sum_{i=1}^L W_i \frac{1+\mathrm{Tr}[P_i\rho]}{2M}+5\right)  \leq \|h\|_1^2 \left(\frac{4}{M}+5\right).
\ee  
Consider performing this $T$ times independently. Let $\tilde{S}:=\frac{1}{t}\sum_{t} \tilde{X}_t$.
By Chebyshev's inequality, 
\begin{align}
\mathbb{P}[|\tilde{S}-\mathbb{E}[\tilde{S}]|\geq \epsilon]\leq \frac{\mathbb{V}[\tilde{S}]}{\epsilon^2}=\frac{\|h\|_1^2}{t\epsilon^2}\left(\frac{4}{M}+5\right).
\end{align}
For a constant failure probability, it sufficies to take $M\in\mathcal{O}(1)$ and $t\in\mathcal{O}(\frac{1}{\epsilon^2}\|h\|_1^2)$.
The dependence on the failure probability can be exponentially improved via median boosting \cite{10.5555/646239.683352}.
\end{proof}

\begin{figure}[htbp]
\begin{algorithm}[H]
\caption{$\ell_1$-sampling operator shadow\label{algorithmL1}}
\renewcommand{\algorithmicrequire}{\textbf{Input:}}
\renewcommand{\algorithmicensure}{\textbf{Output:}}
\begin{algorithmic}[1]
\Require{Query and $\ell_1$-sampling access for the Pauli coefficients of an observable $O$, copies of an unknown quantum state $\rho$}
\For{$t=1$ to $T$}
\State Sample an index $j_t\in [L]$ with the $\ell_1$-sampling access
\State Implement the corresponding Pauli operator $P_{j_t}$ to the state $\rho$ on quantum device and measure in the computational basis with constant (e.g., $4$) times.
\State Compute the mean of the measurement outcomes as in Lemma \ref{lemmaPauliMeasurement} and store it as $Z_{j_t}$.
\State Compute $\tilde{X}_t:=\mathrm{sign}(h_{j_t})\left(2Z_{j_t}-1\right)\|h\|_1$ as in Theorem \ref{theoremL1}
\EndFor
\Ensure{$\tilde{S}=\frac{1}{T} \sum_{t=1}^T \tilde{X}_t$}
\end{algorithmic}
\end{algorithm}
\end{figure}

In the next subsection, we describe how to construct the sampling access explicitly.
We also describe the $\ell_2$-sampling operator shadow in Appendix ~\ref{AppendixL2}, whose sample complexity is $\mathcal{O}(L\|h\|_2^2/\epsilon^2)$.
$\ell_1$-sampling operator shadow is better than $\ell_2$-sampling operator shadow on the sample complexity, which can be seen by using the standard norm identity as Eq.~(\ref{eqL1}).
Note that here, $h$ is a $L$-dimension vector.

\subsection{Explicit construction of sampling access} \label{sec.construct}
In many practical problem settings like fidelity estimation \cite{PhysRevLett.106.230501}, described in Section \ref{section.fidelity}, the Pauli decompositions of observables are already given. 
However, if we hope to sample according to these Pauli coefficients, we also need to construct a data structure that makes sampling access possible.
Here, we describe the explicit construction of sampling access used in Theorem \ref{theoremL1}.

We start with the step to compute the Pauli coefficients. 
If the Pauli coefficients are already known, one can skip to the next step.
For a $2^n$-dimensional observable $O=\sum_{i=1}^{2^{2n}}h_i P_i$, one can compute each Pauli coefficient $h_i$ by $h_i=\mathrm{Tr}[P_i O]/2^n$. 
The heaviest part of computing is the matrix multiplication.
Note that an $n$-qubit Pauli string has only one nonzero element at each row and column, and has $2^n$ nonzero entries in total.
Based on this observation, one can use time $\mathcal{O}(2^{3n})$ to compute all Pauli coefficients \cite{romero2023paulicomposer}. 
If the observable $O$ is $k$-local, i.e., only nontrivially acts on $k$ qubits, the time complexity is $\mathcal{O}(2^{3k})$.

Once we have the values of each $h_k$, we can construct a binary search tree structure as Fig.~\ref{fig:enter-label}, see \cite{Walker1974,Vose1991,Kerenidis2017}. 
It takes time $\widetilde{\mathcal{O}}(2^{2n})$ to construct the data structure in the worst case, where the tilde notation omits $\log$ factors.
After constructing the data structure, sampling access takes $\mathcal{O}(n)$ time per sample.
Note that updating the data structure is efficient, i.e., it takes $\mathcal{O}({\rm poly} (n))$ time to read and update an entry.
For $k$-local observable, it takes $\widetilde{\mathcal{O}}(2^{2k})$ for the data structure construction and $\mathcal{O}(k)$ for each sampling access.
We summarize the results of constructing the data structure and implementing one sampling access under different conditions in Table.~\ref{table.datastructure}.

\begin{table}[H]
\begin{center}
\begin{tabular}{ |p{6cm}|p{3cm}|p{3cm}|  }
\hline
Setting & Construction & Sampling access\\
\hline \hline
General   & $\widetilde{\mathcal{O}}(2^{3n})$ & $\mathcal{O}(n)$\\
\hline
$k$-local &   $\widetilde{\mathcal{O}}(2^{3k})$ & $\mathcal{O}(k)$  \\
\hline
Pauli coefficients known &$\widetilde{\mathcal{O}}(2^{2n})$ & $\mathcal{O}(n)$\\
\hline
$k$-local \& Pauli coefficients known    &$\widetilde{\mathcal{O}}(2^{2k})$ & $\mathcal{O}(k)$ \\
\hline
\end{tabular}
\end{center}
\caption{Time complexity for explicit construction of the $\ell_1$-sampling access in different settings. }
\label{table.datastructure}
\end{table}

\begin{figure}
\centering
\begin{tikzpicture}[level distance=1.5cm,
level 1/.style={sibling distance=3cm},
level 2/.style={sibling distance=1.5cm},
level 3/.style ={level distance=1cm}]
\node {$\|h\|_1$}
child {node {$|h_1|+|h_2|$}
child {node {$|h_1|$} child{node {$\mathrm{sign}(h_1)$}}}
child {node {$|h_2|$}child{node {$\mathrm{sign}(h_2)$}}}
}
child {node {$|h_3|+|h_4|$}
child {node {$|h_3|$}child{node {$\mathrm{sign}(h_3)$}}}
child {node {$|h_4|$}child{node {$\mathrm{sign}(h_4)$}}}
};
\end{tikzpicture}
\caption{Data structure for $\ell_1$-sampling importance shadow of an operator with four Pauli coefficients, i.e., given the 4-dimensional vector $(h_1, h_2, h_3, h_4)^T$.}
\label{fig:enter-label}
\end{figure}

\subsection{Generalization to multiple observables}

Here, we discuss how to generalize to multiple observables. 
Mainly, we consider two problems: (1) estimate the expectation values of multiple observables, (2) estimate the expectation value of a linear combination of observables.

\begin{prob}[Estimation for multiple observables] \label{task.Multi}
Assume we are given $M$ observables $O_i$, $i\in [M]$, and many copies of an unknown quantum state $\rho$.
With constant success probability, estimate each $\mathrm{Tr}[O_i\rho]$ with additive precision $\epsilon$.
\end{prob}

\begin{prob}[Estimation for linear combination of observables] \label{task.Combine}
Assume we are given $M$ observables $O_i$, $i\in [M]$, real coefficients $w_i$, and many copies of an unknown quantum state $\rho$.
With constant success probability, estimate $\mathrm{Tr}[A\rho]:=\mathrm{Tr}[\sum_i w_i O_i\rho]$ with additive precision $\epsilon$.
\end{prob}

Problem \ref{task.Multi} is exactly the same question
considered in classical shadow \cite{huang_predicting_2020}.
Problem \ref{task.Combine} corresponds to cases like estimating the energy of $k$-local Hamiltonians, where each term of the Hamiltonian acts nontrivially on at most $k$ qubits.
\begin{corollary}[$\ell_1$-sampling operator shadow for multiple observables]\label{coro.multi}
There is an algorithm that solves Problem~\ref{task.Multi} with $\mathcal{O}(M\log M\|h\|_1^2/\epsilon^2)$ samples.
\end{corollary}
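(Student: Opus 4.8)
The plan is to reduce Problem~\ref{task.Multi} to $M$ independent instances of Theorem~\ref{theoremL1}, one per observable, and then stitch the individual guarantees together by amplifying each estimate's success probability and applying a union bound. Concretely, I would write each observable in its Pauli decomposition $O_i = \sum_j h^{(i)}_j P^{(i)}_j$, build the $\ell_1$-sampling data structure of Section~\ref{sec.construct} for every coefficient vector $h^{(i)}$, and run the estimator of Algorithm~\ref{algorithmL1} separately on each $O_i$ to produce an estimate $\tilde{S}_i$ of $\mathrm{Tr}[O_i\rho]$. Writing $\|h\|_1 := \max_i \|h^{(i)}\|_1$ for the worst-case $\ell_1$ norm across the observables, Theorem~\ref{theoremL1} guarantees that a single such run attains additive accuracy $\epsilon$ using $\Ord{\|h\|_1^2/\epsilon^2}$ measurements at constant success probability.

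The key step, and the source of the $\log M$ factor, is the boosting argument. A constant per-observable success probability is insufficient: to certify that \emph{all} $M$ estimates are simultaneously $\epsilon$-accurate with constant overall probability, I would drive each individual failure probability below $\delta/M$ for a suitable constant $\delta$. As noted immediately after Theorem~\ref{theoremL1}, the median-of-means trick \cite{10.5555/646239.683352} amplifies a constant failure probability down to $\delta/M$ at a multiplicative overhead of $\Ord{\log(M/\delta)} = \Ord{\log M}$ in the sample count. Multiplying the per-observable cost $\Ord{\|h\|_1^2/\epsilon^2}$ by this overhead and summing over the $M$ observables yields the claimed total of $\Ord{M\log M\,\|h\|_1^2/\epsilon^2}$ samples.

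Finally I would close with the union bound: if each $\tilde{S}_i$ deviates from $\mathrm{Tr}[O_i\rho]$ by more than $\epsilon$ with probability at most $\delta/M$, then the probability that \emph{some} estimate among the $M$ fails is at most $\delta$, so every estimate meets the target accuracy simultaneously with probability at least $1-\delta$. I do not expect any genuine obstacle here, since the bulk of the work is a direct invocation of Theorem~\ref{theoremL1}; the only point requiring care is the trade-off between per-estimate amplification and the union bound, which is exactly what forces the extra $\log M$ factor and accounts for the sample complexity being exponentially worse in the number of observables than in the single-observable case.
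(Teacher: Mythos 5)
Your proposal is correct and follows essentially the same route as the paper: run the Theorem~\ref{theoremL1} estimator separately for each of the $M$ observables, use median boosting to push each failure probability down to $O(1/M)$ at an $O(\log M)$ multiplicative overhead, and conclude with a union bound. The paper's proof is just a one-line compression of exactly this argument (with $\|h\|_1$ implicitly denoting the worst-case norm $\max_i \|h^{(i)}\|_1$, as you made explicit).
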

\begin{proof}
Simply apply Theorem \ref{theoremL1} for each observable and independently repeat $\mathcal{O}(\log M)$ times to achieve the constant success probability by the union bound.
\end{proof}

\begin{corollary}[Estimate linear combination of observables]\label{coro.lin}
There is an algorithm that solves Problem~\ref{task.Combine} with $\mathcal{O}(\max_i \|h^i\|_1^2 \|w\|_1^2/\epsilon^2)$ samples and $\widetilde{\mathcal{O}}(M 2^{3n})$ running time, where $\|w\|_1=\sum_i |w_i|$ and $\|h^{i}\|$ is the $\ell_1$ norm of the Pauli coefficients of the $i$-th observable.
\end{corollary}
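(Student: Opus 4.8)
The plan is to reduce Problem~\ref{task.Combine} to a single invocation of Theorem~\ref{theoremL1} by regarding the weighted sum $A=\sum_i w_i O_i$ as one effective observable and writing down its Pauli decomposition explicitly. Substituting $O_i=\sum_j h^i_j P^i_j$ into the sum gives
\begin{align}
A=\sum_i w_i O_i = \sum_i\sum_j \left(w_i h^i_j\right) P^i_j,
\end{align}
which is itself a (generally non-reduced) Pauli decomposition of $A$ whose coefficient list $H$ has entries $w_i h^i_j$. First I would run Algorithm~\ref{algorithmL1} on this combined list, sampling a pair $(i,j)$ with probability $|w_i h^i_j|/\|H\|_1$, so that everything downstream is a verbatim instance of the single-observable procedure.

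Next I would bound the one norm that controls the sample complexity. The $\ell_1$ norm of the combined coefficients is
\begin{align}
\|H\|_1 = \sum_i\sum_j |w_i|\,|h^i_j| = \sum_i |w_i|\,\|h^i\|_1 \le \left(\max_i \|h^i\|_1\right)\sum_i |w_i| = \|w\|_1 \max_i \|h^i\|_1,
\end{align}
a Hölder-type estimate in which the statement deliberately relaxes the tighter quantity $\sum_i |w_i|\|h^i\|_1$ to the product form. Feeding $\|H\|_1$ into the bound $\Ord{\|H\|_1^2/\epsilon^2}$ of Theorem~\ref{theoremL1} yields $\Ord{\max_i \|h^i\|_1^2\,\|w\|_1^2/\epsilon^2}$ samples, as claimed. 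For the running time I would appeal to Section~\ref{sec.construct}: computing the Pauli coefficients of a single $2^n$-dimensional observable costs $\tOrd{2^{3n}}$, so doing this for all $M$ observables costs $\tOrd{M 2^{3n}}$, while merging the resulting $O(M 4^n)$ coefficients into one binary-search-tree for $\ell_1$-sampling is subdominant, giving the stated $\tOrd{M 2^{3n}}$ total.

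The main point requiring care — rather than a genuine difficulty — is checking that Theorem~\ref{theoremL1} applies unchanged to this flattened decomposition even though the Pauli strings $P^i_j$ may coincide across different observables. This is benign: the estimator $\tilde{X}=\mathrm{sign}(w_i h^i_j)(2Z_{(i,j)}-1)\|H\|_1$ is unbiased for $\mathrm{Tr}[A\rho]$ regardless of repetitions, since its expectation reproduces $\sum_{i,j} w_i h^i_j\,\mathrm{Tr}[P^i_j\rho]=\mathrm{Tr}[A\rho]$, and the variance bound $\mathbb{V}[\tilde{X}]\le \|H\|_1^2(4/M+5)$ from the proof of Theorem~\ref{theoremL1} depends only on the total $\ell_1$ weight of the list, not on the distinctness of the Paulis. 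If anything, merging duplicate terms can only decrease $\|H\|_1$ by the triangle inequality, so the bound above is conservative. The remaining work is pure bookkeeping: confirming that building the tree supplies query access to each $w_i h^i_j$, its sign, and the norm $\|H\|_1$ at the root, which are exactly the resources Theorem~\ref{theoremL1} assumes, so no additional ingredient beyond the norm estimate and the construction cost is needed.
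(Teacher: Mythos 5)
Your proposal is correct and follows essentially the same route as the paper: both form the combined observable $A=\sum_i\sum_j w_i h^i_j P^i_j$, apply Theorem~\ref{theoremL1} to it, bound the combined $\ell_1$ norm by $\|w\|_1\max_i\|h^i\|_1$, and charge $\tOrd{M2^{3n}}$ for computing the Pauli coefficients and building the sampling structure. The only difference is cosmetic: the paper implicitly merges coefficients of coinciding Pauli strings into a single length-$4^n$ vector (the merging is absorbed by the triangle inequality $\|h\|_1\le\sum_{i,j}|w_ih^i_j|$), whereas you run the algorithm on the unmerged list and verify directly that duplicates leave unbiasedness and the variance bound intact — a valid, slightly more explicit treatment of the same point.
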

\begin{proof}
First, we sum up the Pauli coefficients for each $O_i$, i.e., $A=\sum_i \sum_j w_i h_j^i P_j^i$, which has at most $4^n$ terms.
If the Pauli coefficients are not given, we can compute them as in Section~\ref{sec.construct}.
We define $h\in \mathbb R^{L}$ such that $A=\sum_{k=1}^{L} h_k P_k$ and construct the sampling data structure as in Section~\ref{sec.construct}, where $L=4^n$.
These preparations will take time $\widetilde{\mathcal{O}}(M2^{3n})$ if Pauli coefficients are not given and $\widetilde{\mathcal{O}}(M2^{2n})$ if they are given.
We use Theorem \ref{theoremL1} to estimate $\mathrm{Tr}[A\rho]$ directly.
By the direct calculation, we have 
\begin{align}
\|h\|_1 \leq \sum_i \sum_j |w_i h_j^i|\leq \max_i \|h^i\|_1 \sum_i|w_i|=\max_i \|h^i\|_1 \|w\|_1.
\end{align}
Therefore, the sample complexity is $\mathcal{O}(\max_i \|h^i\|_1^2 \|w\|_1^2/\epsilon^2)$.
\end{proof}

\section{Comparison with classical shadow}

In this section, we first describe the previous work about classical shadow \cite{huang_predicting_2020}, and discuss the comparison.
The essence of the classical shadow is to construct a classical sketch of a density matrix $\rho$ via random measurements and classical processing.
The given quantum state is evolved with a unitary chosen randomly from a fixed ensemble.
Widely considered ensembles are Pauli operators and Clifford unitaries.
Here, we focus on the Pauli operator ensembles and $k$-local observables, in which case we can make a direct comparison.

Given an unknown quantum state $\rho$, choose uniformly a unitary $U$ from Pauli ensemble and apply to the quantum state, i.e., $\rho \rightarrow U\rho U^\dagger$.
Let a computational basis measurement outcome be  $\hat{b}\in \{0,1\}^n$.
The averaged mapping from $\rho$ to the state implied by the measurement outcomes can be viewed as a quantum channel:
\begin{align}
\mathcal{M}(\rho)= \mathbbm E\left[ U^\dagger |\hat{b}\rangle\langle \hat{b}| U \right].
\end{align}
Inverting the channel formally gives the quantum state $\rho$, i.e., $ \rho = \mathbbm E\left[\mathcal M^{-1} \left(U^\dagger |\hat{b}\rangle\langle \hat{b}| U \right)\right]$. Given a set of actual measurement outcomes,  define the classical snapshot $\hat \rho(\hat b) := \mathcal M^{-1}\left( U^\dagger |\hat{b}\rangle\langle \hat{b}| U \right)$. By definition $\rho(\hat b)$ is an unbiased estimator of $\rho$, i.e., $\mathbb{E}[\hat{\rho}(\hat{b})]=\rho$.
For the case of random Pauli measurements, the mapping has a closed form
\begin{align}\label{eq.shadow}
\hat{\rho}(\hat b)=\bigotimes_{j=1}^n \left(3U_j^\dagger |\hat{b}\rangle\langle \hat{b}| U_j-\mathbb{I}\right),
\end{align}
where $\{U_j\}$ is the sequence of the random unitaries, i.e., $U=U_1\otimes U_2\cdots\otimes U_n$.
With this, one can estimate the expectation value of observables with an unknown quantum state.
The theoretical guarantee is given as follows.

\begin{lemma}[Classical shadow with Pauli measurements \cite{huang_predicting_2020}] \label{lemmaShadow}
Let $\epsilon\in(0,1]$. Given multiple copies of an $n$-qubit quantum state $\rho$, fix a measurement primitive $\mathcal U$ of randomly chosen Pauli measurements and a collection $O_{1}, \ldots, O_{M}$ of $2^{n} \times 2^{n}$ $k$-local observables.
To estimate $\mathrm{Tr}[O_i\rho]$ for each $i\in [M]$ up to precision $\epsilon$ with constant probability, using the above procedure with $\mathcal U$ it sufficies to take 
$$\mathcal{O}\left(\max_{i\in [M]}\|O_i\|_{\infty}^2 4^k\log M/\epsilon^2\right)$$ 
independent samples, where $\|\cdot\|_\infty$ is the spectral norm.
\end{lemma}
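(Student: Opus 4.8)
The plan is to build the estimator from the single-shot snapshots $\hat\rho(\hat b)$ of Eq.~(\ref{eq.shadow}) and then control its fluctuations. First I would define, for each observable, the single-shot estimator $\hat o_i(\hat b) := \mathrm{Tr}[O_i\,\hat\rho(\hat b)]$. Since $\mathbb{E}[\hat\rho(\hat b)]=\rho$ and the trace is linear, each $\hat o_i$ is an unbiased estimator of $\mathrm{Tr}[O_i\rho]$. To convert an unbiased estimator with bounded variance into an estimate that holds with high probability, and then to cover all $M$ observables simultaneously, I would use the median-of-means procedure: split the $N$ samples into $K$ batches, average within each batch, and take the median over the batch means. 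A variance bound together with Chebyshev's inequality makes each batch mean concentrate, the median boosts the per-observable failure probability exponentially in $K$, and a union bound over the $M$ observables then fixes $K\in\mathcal{O}(\log M)$.

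The quantity that governs the batch size is the worst-case second moment, which I would package as the shadow norm
\begin{align}
\|O_i\|_{\mathrm{shadow}}^2 := \max_{\sigma}\, \mathbb{E}\left[\mathrm{Tr}[O_i\,\hat\rho(\hat b)]^2\right],
\end{align}
the maximum taken over all states $\sigma$ generating the snapshot $\hat\rho(\hat b)$. Since $\mathbb{V}[\hat o_i]\leq \mathbb{E}[\hat o_i^2]\leq \|O_i\|_{\mathrm{shadow}}^2$, Chebyshev forces each batch to contain $\mathcal{O}(\max_i\|O_i\|_{\mathrm{shadow}}^2/\epsilon^2)$ samples, so the total count is $\mathcal{O}(\log M\,\max_i\|O_i\|_{\mathrm{shadow}}^2/\epsilon^2)$. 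It is convenient to first replace $O_i$ by its traceless part $O_i-2^{-n}\mathrm{Tr}[O_i]\,\mathbb{I}$, which only shifts the estimated quantity by a known constant and does not affect the scaling.

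The crux, and the step I expect to be hardest, is the bound $\|O_i\|_{\mathrm{shadow}}^2\leq 4^k\|O_i\|_\infty^2$ for $k$-local $O_i$. Here I would exploit the tensor-product form of the Pauli reconstruction map in Eq.~(\ref{eq.shadow}): because $\mathcal{M}^{-1}$ factorizes qubit by qubit and $O_i$ acts nontrivially only on $k$ qubits, the second moment factorizes into $k$ single-qubit contributions, with the untouched qubits averaging to identity. Each single-qubit factor, built from the map $X\mapsto 3U_j^\dagger|\hat b\rangle\langle\hat b|U_j-\mathbb{I}$, can be bounded by a constant, so multiplying the $k$ relevant factors produces the $4^k$ prefactor; the residual dependence is then collapsed to the spectral norm $\|O_i\|_\infty$ by maximizing over the input state. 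The delicate points are executing the single-qubit second-moment calculation cleanly and checking that the per-qubit factors compose to the claimed $4^k$, after which substituting this bound into the sample count of the previous paragraph yields $\mathcal{O}(\max_i\|O_i\|_\infty^2\,4^k\log M/\epsilon^2)$.
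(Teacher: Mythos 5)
The paper offers no proof of this lemma at all --- it is quoted verbatim from Ref.~\cite{huang_predicting_2020} --- so the only argument to compare yours against is the one in that cited source, and your sketch reconstructs it faithfully: unbiased snapshots $\mathrm{Tr}[O_i\hat\rho(\hat b)]$, median-of-means plus a union bound to supply the $\log M$ factor, and the local-observable shadow-norm bound $\|O_i\|_{\mathrm{shadow}}^2\leq 4^k\|O_i\|_{\infty}^2$ obtained from the qubit-wise factorization of the inverse Pauli channel. The one imprecision is your account of where the $4^k$ comes from: in the cited proof it is not a product of $k$ per-qubit constants alone, but rather $2^k$ from the per-qubit second-moment calculation times another $2^k$ from bounding $\mathrm{Tr}[O_i^2]\leq 2^k\|O_i\|_{\infty}^2$ on the $2^k$-dimensional support (essentially the paper's Lemma~\ref{lemmaNorm} restricted to that subspace) --- this is the very step you flag as delicate, so it is a gloss rather than a gap.
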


Here, we add the analysis of the time complexity for the classical shadow method.
The time complexity arises from two parts: First, construct the shadow $\hat{\rho}(\hat{b})$ for each measurement outcome $\hat{b}$. Second, compute $\mathrm{Tr}[O\hat{\rho}(\hat{b})]$ and take the average.
For the first step, it takes time $\mathcal{O}(n\cdot T_{\mathrm{samp}})$, where $T_{\mathrm{samp}}$ is the sample complexity, and $n$ arising from the need to compute the $\mathbb{C}^{2\times 2}$ matrix representation for each qubit, as Eq.~(\ref{eq.shadow}).
For the second step, 
note that the shadow $\hat{\rho}(\hat b)$ is of the form $\hat{\rho}(\hat b)=\hat{\rho}_1(\hat{b})\otimes\cdots\otimes\hat{\rho}_n(\hat{b})$.
One can write a $k$-local observable $O$ in the form $O=\hat{O}\otimes \mathbb{I}_{n-k}$, where $\hat{O}$ acts nontrivially on at most $k$ qubits.
We have $\mathrm{Tr}[O \hat\rho(\hat b)]=\mathrm{Tr}[\hat{O}\hat{\rho}_1(\hat{b})\otimes\cdots\otimes \hat{\rho}_k(\hat{b})]$.
Computing this expectation value will take time $\mathcal{O}(2^{2k})$, which can be seen from the trace computation among two $\mathbb{C}^{2^k\times 2^k}$ matrices.
Since we do the computation for every snapshot, in total, the time complexity of the classical shadow is $\mathcal{O}((n+2^{2k})\cdot T_{\mathrm{samp}})$. We leave open the possibility of more efficient post-processing.

\subsection{Multiple local observables}

Here we consider Problem~\ref{task.Multi}, and further assume all observables are $k$-local.
Recall that in this work, a $k$-local observable means that the observable acts nontrivially on at most $k$-qubits.
As shown in Corollary~\ref{coro.multi}, the $\ell_1$-sampling operator shadow scales with $\mathcal{O}(M)$, which means that for this factor, Theorem \ref{theoremL1} is exponentially worse than the classical shadow. 
However, the dependency of Theorem \ref{theoremL1} on other parameters is actually better than the classical shadow.
Hence, we can derive a range of $M$ in which the sample complexity of Theorem \ref{theoremL1} is better than Lemma \ref{lemmaShadow}.

\begin{lemma}
Let us be given a collection $O_{1}, \ldots, O_{M}$ of $2^{n} \times 2^{n}$ $k$-local observables, where $h^j$ is the coefficient vector for Pauli decomposition of $O_j$.
For $M/\log(2M) \leq C 4^k \|O_{i^\ast}\|^2_{\infty}/\|h^{j^\ast}\|_1^2$, where 
$i^\ast :=\arg \max_{i\in [M]} \|O_i\|_{\infty}$, $j^\ast = \arg \max_{j\in [M]} \|h^{j}\|_1$,  and $C>0$ a constant, the sample complexity of Theorem \ref{theoremL1} is better than Lemma \ref{lemmaShadow}.
\end{lemma}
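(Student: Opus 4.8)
The plan is to reduce the claim ``$\ell_1$-sampling operator shadow is cheaper than classical shadow'' to a single scalar inequality $T_{\mathrm{OS}}\le T_{\mathrm{CS}}$ between the two sample complexities, and then solve that inequality for $M$. For the operator-shadow side I would invoke Theorem~\ref{theoremL1} observable by observable: each $O_i$ is estimated with $\Ord{\|h^i\|_1^2/\epsilon^2}$ measurements, and upper bounding every coefficient norm by its maximizer $\|h^i\|_1\le\|h^{j^\ast}\|_1$ turns the total into $T_{\mathrm{OS}}=\Ord{M\|h^{j^\ast}\|_1^2/\epsilon^2}$. For the classical-shadow side Lemma~\ref{lemmaShadow} directly gives $T_{\mathrm{CS}}=\Ord{\|O_{i^\ast}\|_\infty^2\,4^k\log(2M)/\epsilon^2}$ once $\|O_i\|_\infty$ is replaced by $\|O_{i^\ast}\|_\infty$. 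Both expressions are now functions of the shared parameters $\epsilon$, $k$, and $M$ only.

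Next I would impose $T_{\mathrm{OS}}\le T_{\mathrm{CS}}$. The $\epsilon^{-2}$ factors cancel immediately, and the implicit big-$\mathcal{O}$ constants coming from Theorem~\ref{theoremL1} and Lemma~\ref{lemmaShadow} can be collected into one constant $C$. What remains is $M\,\|h^{j^\ast}\|_1^2\le C\,4^k\,\|O_{i^\ast}\|_\infty^2\log(2M)$, and dividing through by $\|h^{j^\ast}\|_1^2\log(2M)$ isolates exactly the advertised threshold $M/\log(2M)\le C\,4^k\,\|O_{i^\ast}\|_\infty^2/\|h^{j^\ast}\|_1^2$. The appearance of $\log(2M)$ rather than $\log M$ is cosmetic: it keeps the logarithm bounded away from zero at $M=1$ and differs from $\log M$ only by a factor absorbed into $C$.

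The step I expect to be most delicate is the bookkeeping of the logarithmic factors together with the success-probability conventions, since this is what fixes the precise $M/\log(2M)$ shape of the threshold. The $\log(2M)$ carried by the classical-shadow bound is intrinsic to its median-of-means estimator acting on a single shared measurement record for all $M$ observables, whereas the operator-shadow count above is the raw total of $M$ independent batches, each delivering constant success through Theorem~\ref{theoremL1}; it is exactly this asymmetry --- a factor $M$ on one side against a factor $\log(2M)$ on the other --- that must be tracked faithfully. A secondary point worth making explicit is that the two methods are governed by genuinely different quantities, the $\ell_1$ norm $\|h^{j^\ast}\|_1$ of the Pauli coefficients versus the spectral norm $\|O_{i^\ast}\|_\infty$, so the comparison is unavoidably worst-case through the maximizers $i^\ast$ and $j^\ast$; the possible gap between these two norms (controlled by identities such as Lemma~\ref{lemmaNorm} and Eq.~(\ref{eqL1})) is precisely what leaves room for either method to win depending on the regime of $M$.
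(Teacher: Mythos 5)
Your reduction to the scalar inequality $C_1 M\|h^{j^\ast}\|_1^2 \leq C_2\, 4^k \|O_{i^\ast}\|_\infty^2 \log(2M)$, followed by solving for $M/\log(2M)$ and absorbing constants into $C$, is exactly the first half of the paper's proof, including the convention (shared by the paper) of counting the operator-shadow cost as a raw sum of $M$ constant-success batches without the union-bound $\log M$ factor. However, the paper's proof does not stop there. It continues with a second step: showing that $4^k\|O_{i^\ast}\|_\infty^2/\|h^{j^\ast}\|_1^2 \geq 1$, via the chain
\begin{align}
\|h^{j^\ast}\|_1 \;\leq\; \sqrt{4^k}\,\|h^{j^\ast}\|_2 \;=\; 2^{k/2}\left(\mathrm{Tr}[\widetilde{O}_{j^\ast}^2]\right)^{1/2} \;=\; 2^{k/2}\,\|\widetilde{O}_{j^\ast}\|_F \;\leq\; 2^k\,\|\widetilde{O}_{j^\ast}\|_\infty \;\leq\; 2^k\,\|O_{i^\ast}\|_\infty,
\end{align}
which combines the Pauli-orthogonality identity $\mathrm{Tr}[O^2]=2^n\|h\|_2^2$ of Eq.~(\ref{eqL2}) (applied on the $k$ nontrivially acted-upon qubits, so the dimension is $2^k$ and $L=4^k$), the norm identity Eq.~(\ref{eqL1}), and Lemma \ref{lemmaNorm}.

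This second step is what keeps the lemma from being vacuous: since $M/\log(2M)\geq 1/\log 2$ for every integer $M\geq 1$, if the right-hand side $C\,4^k \|O_{i^\ast}\|_\infty^2/\|h^{j^\ast}\|_1^2$ could be arbitrarily small, the stated range of $M$ could contain no integer at all, and the conditional claim would assert nothing. The norm chain guarantees the ratio is at least $1$, so the advertised regime always includes at least small $M$. Your closing paragraph about the ``gap between the two norms'' gestures at precisely this issue but never proves the inequality; to match the paper's proof you need to add the chain above, which is the only place where $k$-locality and the structure of the Pauli decomposition are actually used.
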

\begin{proof}
Using the importance sampling, we obtain an upper bound on the total number of samples of 
$C_1 M \frac{\|h^{j^\ast}\|_1^2}{\epsilon^2}$,
where $j^\ast = \arg \max_{j\in [M]} \|h^{j}\|_1$ and $C_1>0$ is a constant.
On the other hand, for the shadows we have the result from Lemma \ref{lemmaShadow} being $C_2 (4^k \|O_{i^\ast}\|^2_{\infty}) \log(2M)/\epsilon^2$, where
$i^\ast :=\arg \max_{i\in [M]} \|O_i\|_{\infty}$ and $C_2>0$ is a constant.
Hence, if 
$$C_1 M\|h^{j^\ast}\|^2_1 \leq C_2 4^k \|O_{i^\ast}\|^2_{\infty}\log(2M),$$
we obtain a better sample complexity for the importance sampling strategy.
To finish the proof, we only need to show that $4^k \|O_{i^\ast}\|^2_{\infty}/\|h^{j^\ast}\|_1^2 \geq 1$. Since these observables are $k$-local, let $O_i=\widetilde{O}_i\otimes \mathbb{I}$, where $\widetilde{O}_i$ acts nontrivially on $k$-qubits.
We have $O_i=\sum_{j=1}^{L}h_j^i P_j\otimes \mathbb{I}$, with $L=4^k$.
Note that from Eq.~(\ref{eqPauli}) we obtain
\begin{align}
\mathrm{Tr}[O^2]&= \mathrm{Tr}\biggl[\sum_{i,j}h_ih_j P_iP_j\biggl]=\mathrm{Tr}\biggl[\sum_i h_i^2\cdot \mathbb{I}\biggl]=2^n \|h\|_2^2. \label{eqL2}
\end{align}
Combining Eq.~(\ref{eqL2}) and Eq.~(\ref{eqL1}), we have 
\be
\|h^{j^\ast}\|_1\leq \sqrt{L}\|h^{j^\ast}\|_2=&2^{\frac{k}{2}} \left(\mathrm{Tr}[\widetilde{O}_{j^\ast}^2]\right)^{\frac{1}{2}}
= 2^{\frac{k}{2}} \|\widetilde{O}_{j^\ast}\|_F
\leq  2^k \|\widetilde{O}_{j^\ast}\|_{\infty} \leq  2^k \|\widetilde{O}_{i^\ast}\|_{\infty},
\ee
where the second equality comes from the definition of Frobenius norm, and the second-to-last inequality comes from Lemma \ref{lemmaNorm}.
\end{proof}

In many practical cases, we may only need to predict constant or small expectation values of observables, and in such cases, Theorem \ref{theoremL1} may have better performance.
We summarize the result in Table.~\ref{table.multi}.
We denote sample complexity by $T_{\rm samp}$.

\begin{table}[H]
\begin{center}
\begin{tabular}{ |p{4cm}|p{6cm}|p{4cm}|  }
\hline
Methods & Sample complexity $T_{\rm samp}$ & Time complexity\\
\hline \hline
Classical shadow \cite{huang_predicting_2020}   & $\mathcal{O}\left(\max_{i\in [M]}\|O_i\|_{\infty}^2 4^k\log M/\epsilon^2\right)$ & $\mathcal{O}( (n+M 2^{2k})\cdot T_{\rm samp})$\\
\hline
This work &   $\mathcal{O}(M\|h\|_1^2/\epsilon^2)$ & $\mathcal{O}(k\cdot T_{\rm samp} + M2^{3k}) $  \\
\hline
\end{tabular}
\end{center}
\caption{Comparison for multiple local observables}
\label{table.multi}
\end{table}

\subsection{Linear combination of local observables}
Here we consider Problem~\ref{task.Combine}, and further assume all observables are $k$-local.
For $M$ $k$-local observables $O_i$, in the worst case $A= \sum_i w_i O_i$ will be $Mk$-local.
If we directly use Lemma \ref{lemmaShadow} to estimate $\mathrm{Tr}[A\rho]$, the sample complexity scales exponentially in terms of $M$.
To achieve the task, with classical shadow we have 
\begin{align}
\left|\sum_i w_i \mathrm{Tr}[O_i\rho]- \sum_i w_i \widetilde{\mathrm{Tr}}[O_i\rho]\right|
\leq \sum_i |w_i| \left|\mathrm{Tr}[O_i\rho]- \widetilde{\mathrm{Tr}}[O_i\rho]\right|\leq \epsilon,
\end{align}
where $\widetilde{\mathrm{Tr}}[\cdot]$ denotes the estimated value.
It suffices to estimate each expectation value with precision $\epsilon/\|w\|_1$, where $\|w\|_1=\sum_i |w_i|$.
Since there are at most $M$ terms, and computing each term $\mathrm{Tr}[O_i\rho]$ takes time $\mathcal O(2^{2k})$, in total it takes time $\mathcal{O}(M2^{2k})$.
As mentioned, for each sample, it takes $\mathcal{O}(k)$ time for the post-processing.
Based on these, we have the following result.
\begin{corollary}
Lemma~\ref{lemmaShadow} can achieve the task described in Problem~\ref{task.Combine} with $$T_{\rm samp} \in \mathcal{O}\left(\frac{1}{\epsilon^2}\log M 4^k \|w\|_1^2 \max_i \|O_i\|_{\infty}^2 \right)$$
copies of the quantum state and $\mathcal{O}((n+M 2^{2k})\cdot T_{\rm samp})$ running time.
\end{corollary}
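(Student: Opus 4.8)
The plan is to reduce the linear-combination estimate to the simultaneous estimation of the $M$ individual expectation values $\mathrm{Tr}[O_i\rho]$, and then invoke the classical-shadow guarantee of Lemma~\ref{lemmaShadow} with a rescaled target precision. First I would use the triangle inequality already displayed above the corollary: writing $\widetilde{\mathrm{Tr}}[O_i\rho]$ for the classical-shadow estimate of each term, one has $|\sum_i w_i \mathrm{Tr}[O_i\rho]-\sum_i w_i \widetilde{\mathrm{Tr}}[O_i\rho]|\leq \sum_i |w_i|\,|\mathrm{Tr}[O_i\rho]-\widetilde{\mathrm{Tr}}[O_i\rho]|$. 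Hence if every individual term is estimated to additive accuracy $\epsilon/\|w\|_1$, the accumulated error for $\mathrm{Tr}[A\rho]$ is at most $\sum_i |w_i|\cdot \epsilon/\|w\|_1 = \epsilon$, which is exactly the accuracy demanded by Problem~\ref{task.Combine}.

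Next I would apply Lemma~\ref{lemmaShadow} to the collection $O_1,\ldots,O_M$ with the target precision set to $\epsilon/\|w\|_1$ rather than $\epsilon$. Substituting $\epsilon \mapsto \epsilon/\|w\|_1$ into the sample-complexity bound $\mathcal{O}(\max_i\|O_i\|_\infty^2\, 4^k \log M/\epsilon^2)$ immediately yields
\begin{align}
T_{\rm samp}\in \mathcal{O}\!\left(\frac{1}{\epsilon^2}\log M\, 4^k\, \|w\|_1^2\, \max_i\|O_i\|_\infty^2\right),
\end{align}
which is the claimed count. The key point worth flagging here is that the classical-shadow guarantee already estimates all $M$ observables from a single ensemble of snapshots with only a $\log M$ overhead (the union bound is baked into Lemma~\ref{lemmaShadow}), so the constant overall success probability is retained without paying an additional factor of $M$ in the sample complexity.

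For the running time I would simply reuse the classical-shadow post-processing analysis developed earlier in this section: each snapshot $\hat\rho(\hat b)$ factorizes as $\hat\rho_1(\hat b)\otimes\cdots\otimes\hat\rho_n(\hat b)$ and costs $\mathcal{O}(n)$ to construct, while evaluating $\mathrm{Tr}[O_i\hat\rho(\hat b)]$ for a single $k$-local observable costs $\mathcal{O}(2^{2k})$ via a trace over two $\mathbb{C}^{2^k\times 2^k}$ matrices. Doing this for all $M$ observables costs $\mathcal{O}(M2^{2k})$ per snapshot, so summing over the $T_{\rm samp}$ snapshots gives the stated $\mathcal{O}((n+M2^{2k})\cdot T_{\rm samp})$ time.

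I do not expect a genuine obstacle in this corollary, since it is essentially a precision-rescaling application of an already-proven lemma. The only point requiring care is the interplay of the two $\|w\|_1$ factors and the success-probability bookkeeping: one must confirm that splitting the allowed error $\epsilon$ uniformly as $\epsilon/\|w\|_1$ across terms is what produces the $\|w\|_1^2$ scaling (not $\|w\|_1$ or $\|w\|_2$), and that the $\log M$ in Lemma~\ref{lemmaShadow} is precisely the factor that certifies all $M$ estimates are simultaneously accurate with constant probability, so no extra repetition or union bound is introduced.
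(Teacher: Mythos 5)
Your proposal is correct and follows essentially the same route as the paper: the triangle inequality to distribute the error, invoking Lemma~\ref{lemmaShadow} with the rescaled precision $\epsilon/\|w\|_1$ (which produces the $\|w\|_1^2$ factor), and reusing the snapshot-construction plus per-observable trace-evaluation costs to obtain the $\mathcal{O}((n+M2^{2k})\cdot T_{\rm samp})$ running time. Your explicit remark that the $\log M$ in Lemma~\ref{lemmaShadow} already covers the simultaneous success of all $M$ estimates is a point the paper leaves implicit, but it is the same argument.
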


The sample complexity of Theorem \ref{theoremL1}, which is described in Lemma~\ref{coro.lin}, is strictly better than Lemma \ref{lemmaShadow}.
This can be seen by noticing that $\max_i \|h^i\|_1\leq 2^k \max_i \|O_i\|_{\infty}$.
However, if we need to estimate the Pauli coefficients,
the time complexity of Theorem~\ref{theoremL1} is worse than the classical shadow.
We summarize the result in Table.~\ref{table.linear}.

\begin{table}[H]
\begin{center}
\begin{tabular}{ |p{3cm}|p{7.5cm}|p{4cm}|  }
\hline
Methods & Sample complexity $T_{\rm samp}$ & Time complexity\\
\hline \hline
Classical shadow  \cite{huang_predicting_2020} & $\mathcal{O}\left(\log M 4^k \|w\|_1^2 \max_i \|O_i\|_{\infty}^2 /\epsilon^2\right)$ & $\mathcal{O}((n+M 2^{2k})\cdot T_{\rm samp})$\\
\hline
This work &  $\mathcal{O}(\max_i \|h^i\|_1^2 \|w\|_1^2/\epsilon^2 )$  & $\mathcal{O}(k\cdot T_{\rm samp} + M2^{3k})$  \\
\hline
\end{tabular}
\end{center}
\caption{Comparison for linear combination of local observables}
\label{table.linear}
\end{table}

\section{Application for fidelity estimation}\label{section.fidelity}

In this section, we consider an application called the \textit{fidelity estimation}.

\begin{prob}[Fidelity estimation] \label{task.fidelity}
For two $n$-qubit quantum states $\rho$ and $\sigma$, estimate fidelity $$F(\rho,\sigma)=\mathrm{Tr}\left[\sqrt{\sqrt{\rho}\sigma \sqrt{\rho}}\right],$$
to a given additive error with high probability.
\end{prob}

We consider the same setting as in Ref.~\cite{PhysRevLett.106.230501}:
let $\rho$ be the target pure state we want to prepare, and $\sigma$ be the exact quantum state we prepare on the quantum device.
They also implicitly assume that we have sampling access to the Pauli coefficients of $\rho$.
Recall that if one state is a pure state, the formula can be simplified as $F(\rho,\sigma)=\mathrm{Tr}[\rho \sigma]$.
For this case, fidelity estimation has been proved to be BQP-complete in Ref.~\cite{PhysRevA.108.012409}.

Flammia and Liu considered using the importance sampling method for this task \cite{PhysRevLett.106.230501}, for which this work can be understood as an improved version along the same direction.
Recall that for a pure state $\rho=\sum_i h_iP_i$, $\mathrm{Tr}[\rho^2]=1=2^n \|h\|_2^2$.
The probability of index $i$ with $\ell_2$-sampling is directly $(\mathrm{Tr}[ P_i\rho])^2/2^n$,
which is the key observation in Ref. \cite{PhysRevLett.106.230501}.
We have 
\begin{align}
\mathrm{Tr}[\rho\sigma]=\frac{1}{2^n}\sum_i^L  \mathrm{Tr}[P_i\rho]\mathrm{Tr}[P_i\sigma].
\end{align}
By estimating the expectation value of the random variable $$X_i=\mathrm{Tr}[P_i\sigma]/\mathrm{Tr}[P_i\rho ],$$ where index $i$ is sampled according to $(\mathrm{Tr}[ P_i\rho])^2/2^n$,
one can estimate the $\mathrm{Tr}[\rho\sigma]$ with error up to $\epsilon>0$.
The sample complexity is $\mathcal{O}\left(2^n/\epsilon^2\right)$.
If we do not have sampling access, we can construct it as in Section \ref{sec.construct}.

This result matches our result of the $\ell_2$-sampling, i.e., $\mathcal{O}(L\|h\|_2^2/\epsilon^2)$, where the equality can be seen from Eq.~(\ref{eqL2}).
By using the $\ell_1$-sampling operator introduced in this work, the sample complexity is 
$\mathcal{O}\left(\|h\|_1^2/\epsilon^2\right).$
As mentioned, $\ell_1$-sampling is better than $\ell_2$-sampling.
There is a specific case that will make $\ell_1$-sampling and $\ell_2$-sampling equal, which is all $|h_i|$ have the same value. In this case, the random variable of $\ell_1$-sampling and $\ell_2$-sampling will be identical, resulting the same performance of these two methods. An example of such a case is the $n$-qubit GHZ state $|\psi\rangle\langle\psi|=\frac{1}{2}(|0\rangle\langle0|^{\otimes n}+|0\rangle\langle1|^{\otimes n}+|1\rangle\langle0|^{\otimes n}+|1\rangle\langle1|^{\otimes n})$, with all coefficients $|h_i|$ being $1/2^{n-1}$.

\section{Numerical simulation}

In our first numerical experiment, we evaluate the performance of the operator shadow in fidelity estimation tasks. 
Similar to \cite{PhysRevLett.106.230501}, we consider a system size of $8$ qubits and a Haar-random pure state generated by Haar-random unitaries is selected as the target state. 
To this state, a 10\% depolarizing error is introduced, resulting in an expected final fidelity of approximately 0.9. 
We compare the performance of $\ell_1$- and $\ell_2$-sampling, with the latter corresponding to the method used in~\cite{PhysRevLett.106.230501}. 
Let $X$ be the estimated fidelity. The metric in this experiment is the absolute error $|X-\mathbbm E(X)|$ and the failure probability $\mathbb{P}(|X-\mathbbm E(X)|\geq \epsilon)$, and $\epsilon$ is set to 0.03. 
With the number of samples $M$ required to estimate each Pauli coefficient held constant, prior deductions suggest that the sample complexity for $\ell_1$-sampling is $\mathcal{O}\left(\|h\|_1^2/\epsilon^2\right)$, which is notably better than $\ell_2$-sampling. 
The results, as shown in Fig.~\ref{fig:fidelity_estimation}, indicate that across all sample size ranges, $\ell_1$-sampling  demonstrates lower absolute errors and failure probabilities compared to $\ell_2$-sampling.

\begin{figure}[htb]
\centering
\includegraphics[width=0.9\linewidth]{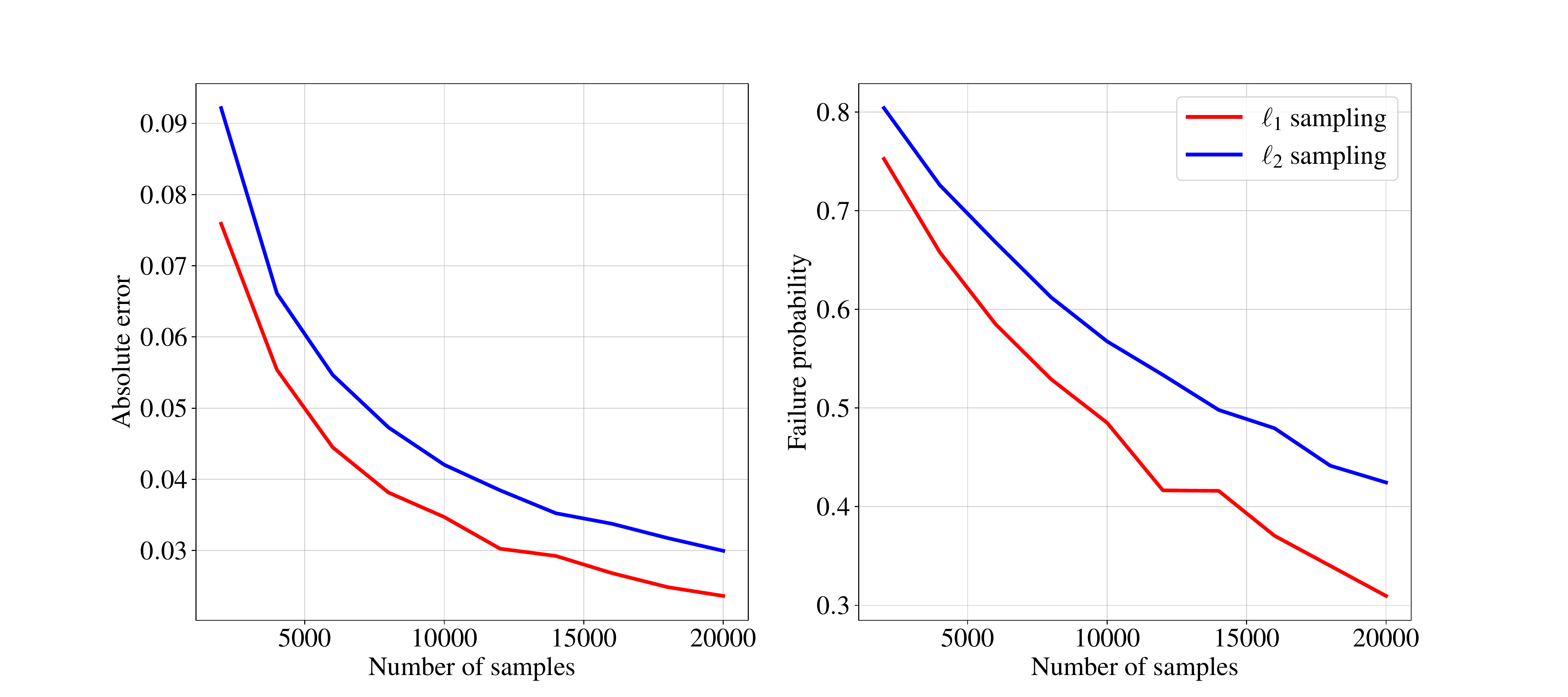}
\caption{Numerical results of the fidelity estimation task}
\label{fig:fidelity_estimation}
\end{figure}

In the second numerical experiment, we compare the performance of classical shadow and operator shadow in estimating the linear combination of observables, as outlined in Problem~\ref{task.Combine}. 
The target system is a rotated surface code~\cite{bombin2007optimal, anderson2013homological}, defined by the Hamiltonian:
\begin{equation}
    H = -\sum_{s=1}^{m_s} S_s - \sum_{p=1}^{m_p} P_p \;,
    \label{eq:rsur}
\end{equation}
where
\begin{equation}
    S_s = \prod_{i\in q_s} Z_i \quad P_p = \prod_{j\in q_p} X_j \;.
    \label{eq:stabilizer}
\end{equation}

The $S_s$ and $P_p$ terms represent Z and X stabilizer generators. All $m=m_s+m_p$ stabilizer generators form the stabilizer group to protect the code space. Each generator is a product of Pauli $Z$ or $X$ operators acting on a set of qubits $q_s$ or $q_p$. Since $S_s$ and $P_p$ are stabilizer generators, they will be commute with each other, ensuring that the ground state of such a Hamiltonian is the $+1$ eigenstates of all generators. A pictorial demonstration of a rotated surface code with $n=9$ qubits and $m=8$ stabilizer generators is shown in Fig.~\ref{fig:rsur}.

\begin{figure}[htb]
\centering
\includegraphics[width=0.4\linewidth]{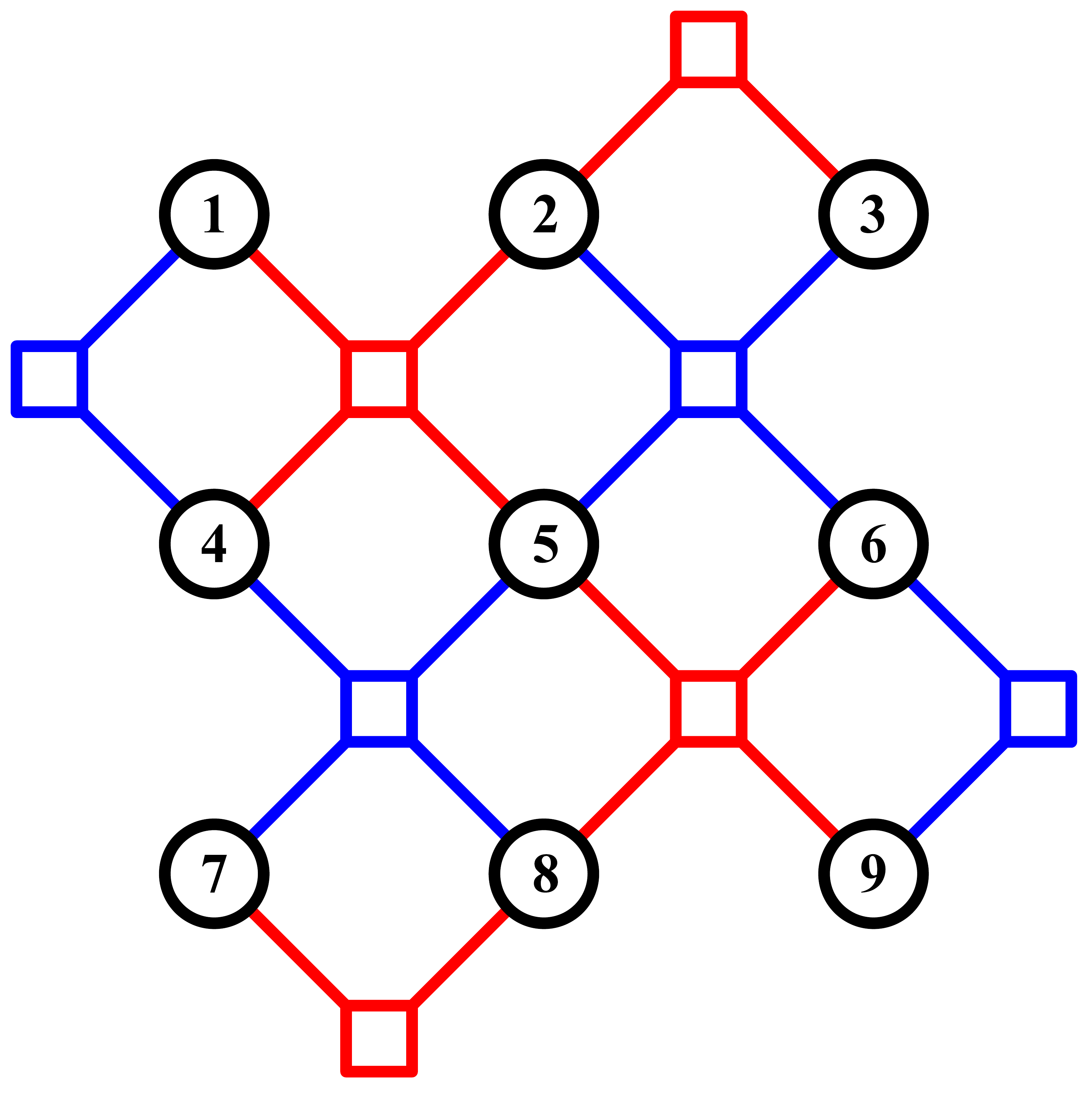}
\caption{Rotated surface code with $n=9$. Here the red squares and blue squares are representing X and Z stabilizer generators respectively. The $S$ terms are $X_2X_3$, $X_1X_2X_4X_5$, $X_5X_6X_8X_9$ and $X_7X_8$, and the $P$ terms are $Z_1Z_4$, $Z_2Z_3Z_5Z_6$, $Z_4Z_5Z_7Z_8$ and $Z_6Z_9$.  }
\label{fig:rsur}
\end{figure}

When the Hamiltonian serves as the observable of interest, it is recognized as a $4$-local global observable, as the maximum weight of stabilizer gnerators is 4 in a rotated surface code, with $m=n-1$ local observables.  
We will add coefficients which obeys the normal distribution to each $S_s$ and $P_p$ term to make the $w_i$ in the linear combination of operators nontrivial.
Given that all $O_i$s are Pauli operators, both their spectral norms and the $\ell_1$ norms of their Pauli coefficients will be $1$. 
These lead to the sample complexities $\mathcal{O}\left(\log(n-1) 4^4 \|w\|_1^2 /\epsilon^2\right)$ for classical shadow and $\mathcal{O}\left(\|w\|_1^2 /\epsilon^2\right)$ for operator shadow. 
In this scenario, as well as in cases where the desired observables are $k$-local Hamiltonian composed of Pauli operators,  the benefits of operator shadow over classical shadow is evident.

In the numerical simulations, we choose a rotated surface code with system size $n=9$. 
To increase the difficulty of such simulation, we set the system is considered in the state
\begin{equation}
    \rho = 0.9\rho_0 + 0.1 \rho_{\text{Haar}} \;,
\end{equation}
where $\rho_0$ is the ground state of the rotated surface code and $\rho_{\text{Haar}}$ is a Haar-random state. 
In such state, the actual value of the Hamiltonian estimation will be around $0.9E_0$ where $E_0$ is the ground state energy.
The corresponding numerical results are presented in Fig~\ref{fig:energy_estimation}. 
It is observed that operator shadow outperforms classical shadow in both absolute error and failure probability. 
It is important to note that while there is a derandomized version of classical shadow~\cite{huang2021efficient}, it is not included in this comparison due to its deterministic measurement scheme.

\begin{figure}[htb]
\centering
\includegraphics[width=0.9\linewidth]{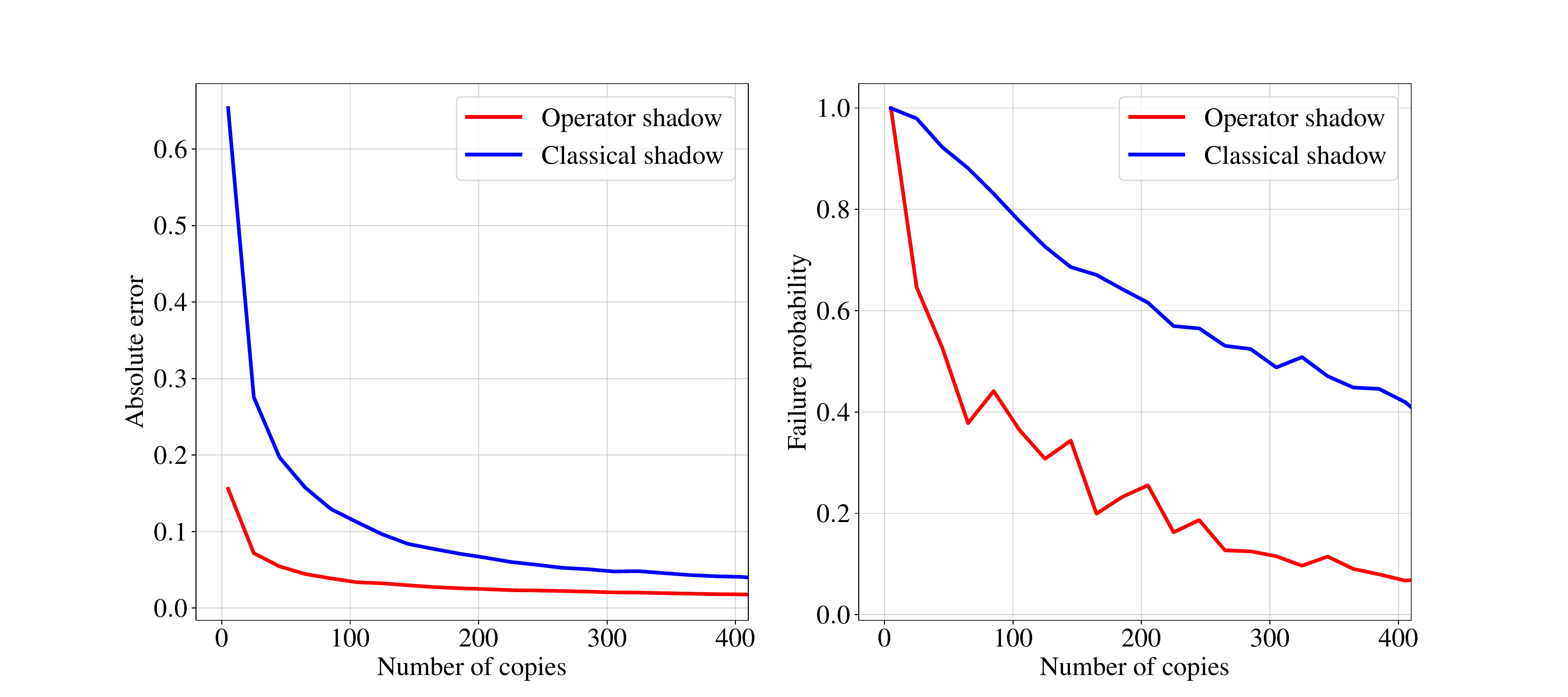}
\caption{Numerical results of the Hamiltonian estimation task. Here the $\epsilon$ to estimate the failure probability is set to be 0.04.}
\label{fig:energy_estimation}
\end{figure}

\section{Conclusion}

We provide a simple method for estimating the expectation value of observables with an unknown quantum state with a comprehensive analysis including how to construct the data structure and both the sample and time complexity. 
The key idea is to do the importance sampling with the Pauli decompositions of the observables. Given this sampling, the method is rather easy to implement experimentally with very little post-processing of the measurement data.
For estimating expectation values of many local observables, which are considered in classical shadow, our method is better only when the number of observables is small.
To estimate the expectation value for the linear combination of local observables, our method is better for all parameters in terms of sample complexity.
Regarding time complexity, our method has to precompute the  sampling data structure in time about $O(2^{3k})$. For classical shadows, we find a post-processing time complexity  of $O(2^{4k})$, the product of  sample complexity and matrix multiplication time. We leave open ways to improve the time complexity of the classical shadow method.
However, if we already know the Pauli coefficients and the sampling data structure, the time complexity is dominated by the sample complexity and hence better than the classical shadow.
This work addresses the fundamental problem of measurement and learning of quantum systems via a method for sampling the operator to be measured and could show benefits in rapidly discovering properties of actual quantum systems.
\section{Acknowledgement}
We thank Bin Cheng for helpful discussions. This research is supported by the National Research Foundation, Singapore, and A*STAR under its CQT Bridging Grant and grant NRF2020-NRF-ISF004-3528.
The authors thank the anonymous reviewers of AQIS2023 for their advice to improve this work.

\bibliographystyle{unsrt}
\bibliography{ref}

\begin{thebibliography}{10}

\bibitem{BRU1999249}
Dagmar Bru{\ss} and Chiara Macchiavello.
\newblock Optimal state estimation for d-dimensional quantum systems.
\newblock {\em Physics Letters A}, 253(5):249--251, 1999.

\bibitem{dariano2003quantum}
G.~Mauro D'Ariano, Matteo G.~A. Paris, and Massimiliano~F. Sacchi.
\newblock Quantum tomography.
\newblock {\em Advances in Imaging and Electron Physics}, 128:205--308, 2003.

\bibitem{ZhuThesis2012}
Huangjun Zhu.
\newblock {\em Quantum State Estimation and Symmetric Informationally Complete
  POMs}.
\newblock PhD thesis, National University of Singapore, 2012.

\bibitem{wrightHowLearnQuantum}
John Wright.
\newblock {\em How to learn a quantum state}.
\newblock PhD thesis, Carnegie Mellon University, 2016.

\bibitem{Flammia_2012}
Steven~T Flammia, David Gross, Yi-Kai Liu, and Jens Eisert.
\newblock Quantum tomography via compressed sensing: error bounds, sample
  complexity and efficient estimators.
\newblock {\em New Journal of Physics}, 14(9):095022, 2012.

\bibitem{10.1145/2897518.2897544}
Ryan O'Donnell and John Wright.
\newblock Efficient quantum tomography.
\newblock In {\em Proceedings of the Forty-Eighth Annual ACM Symposium on
  Theory of Computing}, STOC '16, page 899–912, New York, NY, USA, 2016.
  Association for Computing Machinery.

\bibitem{10.1145/2897518.2897585}
Jeongwan Haah, Aram~W. Harrow, Zhengfeng Ji, Xiaodi Wu, and Nengkun Yu.
\newblock Sample-optimal tomography of quantum states.
\newblock In {\em Proceedings of the Forty-Eighth Annual ACM Symposium on
  Theory of Computing}, STOC '16, page 913–925, New York, NY, USA, 2016.
  Association for Computing Machinery.

\bibitem{Aaronson_2007}
Scott Aaronson.
\newblock The learnability of quantum states.
\newblock {\em Proceedings of the Royal Society A: Mathematical, Physical and
  Engineering Sciences}, 463(2088):3089--3114, 2007.

\bibitem{Aaronson_2019}
Scott Aaronson, Xinyi Chen, Elad Hazan, Satyen Kale, and Ashwin Nayak.
\newblock Online learning of quantum states.
\newblock {\em Journal of Statistical Mechanics: Theory and Experiment},
  2019(12):124019, 2019.

\bibitem{chen2022adaptive}
Xinyi Chen, Elad Hazan, Tongyang Li, Zhou Lu, Xinzhao Wang, and Rui Yang.
\newblock Adaptive online learning of quantum states.
\newblock {\em arXiv: 2206.00220}, 2022.

\bibitem{chen2020practical}
Yifang Chen and Xin Wang.
\newblock More practical and adaptive algorithms for online quantum state
  learning.
\newblock {\em arXiv:2006.01013}, 2020.

\bibitem{10.1145/3188745.3188802}
Scott Aaronson.
\newblock Shadow tomography of quantum states.
\newblock In {\em Proceedings of the 50th Annual ACM SIGACT Symposium on Theory
  of Computing}, STOC 2018, page 325–338, New York, NY, USA, 2018.
  Association for Computing Machinery.

\bibitem{huang_predicting_2020}
Hsin-Yuan Huang, Richard Kueng, and John Preskill.
\newblock Predicting many properties of a quantum system from very few
  measurements.
\newblock {\em Nature Physics}, 16(10):1050--1057, 2020.

\bibitem{Huang2022exp}
Hsin-Yuan Huang, Michael Broughton, Jordan Cotler, Sitan Chen, Jerry Li, Masoud
  Mohseni, Hartmut Neven, Ryan Babbush, Richard Kueng, John Preskill, and
  Jarrod~R. McClean.
\newblock Quantum advantage in learning from experiments.
\newblock {\em Science}, 376(6598):1182--1186, 2022.

\bibitem{bertoni2023shallow}
Christian Bertoni, Jonas Haferkamp, Marcel Hinsche, Marios Ioannou, Jens
  Eisert, and Hakop Pashayan.
\newblock Shallow shadows: Expectation estimation using low-depth random
  clifford circuits.
\newblock {\em arXiv: 2209.12924}, 2023.

\bibitem{Huang2022phaselearning}
Hsin-Yuan Huang, Richard Kueng, Giacomo Torlai, Victor~V. Albert, and John
  Preskill.
\newblock Provably efficient machine learning for quantum many-body problems.
\newblock {\em Science}, 377(6613):eabk3333, 2022.

\bibitem{lewis2023improved}
Laura Lewis, Hsin-Yuan Huang, Viet~T. Tran, Sebastian Lehner, Richard Kueng,
  and John Preskill.
\newblock Improved machine learning algorithm for predicting ground state
  properties, 2023.

\bibitem{PhysRevLett.129.030503}
Kianna Wan, Mario Berta, and Earl~T. Campbell.
\newblock Randomized quantum algorithm for statistical phase estimation.
\newblock {\em Phys. Rev. Lett.}, 129:030503, 2022.

\bibitem{arrasmith2020operator}
Andrew Arrasmith, Lukasz Cincio, Rolando~D. Somma, and Patrick~J. Coles.
\newblock Operator sampling for shot-frugal optimization in variational
  algorithms.
\newblock {\em arXiv: 2004.06252}, 2020.

\bibitem{PhysRevLett.123.070503}
Earl Campbell.
\newblock Random compiler for fast hamiltonian simulation.
\newblock {\em Phys. Rev. Lett.}, 123:070503, Aug 2019.

\bibitem{PhysRevLett.106.230501}
Steven~T. Flammia and Yi-Kai Liu.
\newblock Direct fidelity estimation from few pauli measurements.
\newblock {\em Phys. Rev. Lett.}, 106:230501, Jun 2011.

\bibitem{10.1145/1039488.1039494}
Alan Frieze, Ravi Kannan, and Santosh Vempala.
\newblock Fast monte-carlo algorithms for finding low-rank approximations.
\newblock {\em J. ACM}, 51(6):1025–1041, 2004.

\bibitem{canonne2014testing}
Clément Canonne and Ronitt Rubinfeld.
\newblock Testing probability distributions underlying aggregated data, 2014.

\bibitem{10.1145/3313276.3316310}
Ewin Tang.
\newblock A quantum-inspired classical algorithm for recommendation systems.
\newblock In {\em Proceedings of the 51st Annual ACM SIGACT Symposium on Theory
  of Computing}, STOC 2019, page 217–228, New York, NY, USA, 2019.
  Association for Computing Machinery.

\bibitem{gilyen2018quantuminspired}
András Gilyén, Seth Lloyd, and Ewin Tang.
\newblock Quantum-inspired low-rank stochastic regression with logarithmic
  dependence on the dimension.
\newblock {\em arXiv: 1811.04909}, 2018.

\bibitem{chia2018quantuminspired}
Nai-Hui Chia, Han-Hsuan Lin, and Chunhao Wang.
\newblock Quantum-inspired sublinear classical algorithms for solving low-rank
  linear systems.
\newblock {\em arXiv: 1811.04852}, 2018.

\bibitem{10.5555/646239.683352}
Mark Jerrum.
\newblock Random generation of combinatorial structures from a uniform
  distribution (extended abstract).
\newblock In {\em Proceedings of the 12th Colloquium on Automata, Languages and
  Programming}, page 290–299, Berlin, Heidelberg, 1985. Springer-Verlag.

\bibitem{PhysRevLett.127.200501}
Ting Zhang, Jinzhao Sun, Xiao-Xu Fang, Xiao-Ming Zhang, Xiao Yuan, and He~Lu.
\newblock Experimental quantum state measurement with classical shadows.
\newblock {\em Phys. Rev. Lett.}, 127:200501, Nov 2021.

\bibitem{Wu2023overlappedgrouping}
Bujiao Wu, Jinzhao Sun, Qi~Huang, and Xiao Yuan.
\newblock Overlapped grouping measurement: {A} unified framework for measuring
  quantum states.
\newblock {\em {Quantum}}, 7:896, January 2023.

\bibitem{romero2023paulicomposer}
Sebastián~V. Romero and Juan Santos-Suárez.
\newblock Paulicomposer: Compute tensor products of pauli matrices efficiently,
  2023.

\bibitem{Walker1974}
A~J Walker.
\newblock {New fast method for generating discrete random numbers with
  arbitrary frequency distributions}.
\newblock {\em Electronics Letters}, 10(8):127--128, 4 1974.

\bibitem{Vose1991}
M.D. Vose.
\newblock {A Linear Algorithms for Generating Random Numbers with A Given
  Distribution}.
\newblock {\em IEEE Transactions on Software Engineering}, 17(972), 1991.

\bibitem{Kerenidis2017}
Iordanis Kerenidis and Anupam Prakash.
\newblock {Quantum Recommendation Systems}.
\newblock In Christos~H. Papadimitriou, editor, {\em 8th Innovations in
  Theoretical Computer Science Conference (ITCS 2017)}, volume~67 of {\em
  Leibniz International Proceedings in Informatics (LIPIcs)}, pages
  49:1--49:21, Dagstuhl, Germany, 2017. Schloss Dagstuhl.

\bibitem{PhysRevA.108.012409}
Soorya Rethinasamy, Rochisha Agarwal, Kunal Sharma, and Mark~M. Wilde.
\newblock Estimating distinguishability measures on quantum computers.
\newblock {\em Phys. Rev. A}, 108:012409, Jul 2023.

\bibitem{bombin2007optimal}
H{\'e}ctor Bomb{\'\i}n and Miguel~A Martin-Delgado.
\newblock Optimal resources for topological two-dimensional stabilizer codes:
  Comparative study.
\newblock {\em Physical Review A}, 76(1):012305, 2007.

\bibitem{anderson2013homological}
Jonas~T Anderson.
\newblock Homological stabilizer codes.
\newblock {\em Annals of Physics}, 330:1--22, 2013.

\bibitem{huang2021efficient}
Hsin-Yuan Huang, Richard Kueng, and John Preskill.
\newblock Efficient estimation of pauli observables by derandomization.
\newblock {\em Physical review letters}, 127(3):030503, 2021.

\end{thebibliography}

\appendix

\newpage
\section{Proof of Lemma \ref{lemmaNorm}}\label{appen.norm}

By definition, we have
$$
\|A\|_F=\sqrt{\sum_i\sum_j |a_{ij}|^2}
= \sqrt{\sum_i \sigma_i(A)^2}
\leq \sqrt{N\max_i \sigma_i(A)^2}=\sqrt{N} \|A\|_{\infty}.$$

\section{$\ell_2$-sampling operator shadow}\label{AppendixL2}
\subsection{Perfect $\ell_2$-sampling}
\begin{defn}[$\ell_2$-sampling access for $O$]\label{defSampling}
Define the sampling access such that we are able to sample an index $k$ with probability 
\be
Q_k :=\frac{ \vert h_k\vert ^2 }{\Vert h \Vert_2 ^2 }.
\ee
\end{defn}

\begin{lemma} [Perfect $\ell_2$-sampling]\label{lemmaDenseQuery}
Assume sampling access as in Definition \ref{defSampling}. Also assume query access to $h_k$ and  $\mathrm{Tr}[P_k\rho]$ for all $k$, i.e., given an index $k$ we obtain the respective element $h_k$ or $\mathrm{Tr}[P_k\rho]$ to very high accuracy (numerical precision). In addition assume knowledge of the norms $ \Vert h \Vert^2$ and $\Vert  \psi \Vert_P^2$, where $\Vert  \psi \Vert_P^2 := \sum_{k=1}^{L} \vert \mathrm{Tr}[P_k\rho]\vert^2.$
Let $\epsilon>0$.
Then we can estimate 
$\mathrm{Tr}[O\rho]$ to additive accuracy $  \Vert  \psi \Vert_P  \Vert h \Vert \epsilon$ in $\Ord{1/\epsilon^2}$ samples and queries with constant success probability. 

\end{lemma}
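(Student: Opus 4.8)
The plan is to mirror the proof of Lemma~\ref{lemmaSamplingL1}, replacing the $\ell_1$ estimator by the standard $\ell_2$-sampling inner-product estimator used in the quantum-inspired literature \cite{10.1145/1039488.1039494,10.1145/3313276.3316310,gilyen2018quantuminspired,chia2018quantuminspired}. First I would define the random variable
$$Y := \frac{h_k\,\mathrm{Tr}[P_k\rho]}{Q_k} = \frac{\mathrm{sign}(h_k)\,\mathrm{Tr}[P_k\rho]\,\|h\|_2^2}{|h_k|},$$
where the index $k$ is drawn according to $Q_k = |h_k|^2/\|h\|_2^2$ and the values $h_k$, $\mathrm{Tr}[P_k\rho]$ are obtained from the assumed query access, while the factor $\|h\|_2^2$ is supplied by the assumed knowledge of the norm. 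Note that $Y$ is well defined on the support of the sampling distribution: any index with $h_k=0$ has $Q_k=0$ and is never drawn, so the $1/|h_k|$ factor never causes a division by zero.

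Next I would verify unbiasedness by direct computation, using the Pauli decomposition Eq.~(\ref{eqPauli}),
$$\mathbb{E}[Y] = \sum_{k=1}^{L} Q_k \frac{h_k\,\mathrm{Tr}[P_k\rho]}{Q_k} = \sum_{k=1}^{L} h_k\,\mathrm{Tr}[P_k\rho] = \mathrm{Tr}[O\rho].$$
The key step is the second-moment bound, which is where the $\ell_2$ choice of weights pays off: the $h_k^2$ in $Q_k$ cancels the $1/Q_k^2$ produced by squaring $Y$, leaving a term independent of the coefficients,
$$\mathbb{E}[Y^2] = \sum_{k=1}^{L} Q_k \frac{h_k^2\,(\mathrm{Tr}[P_k\rho])^2}{Q_k^2} = \sum_{k=1}^{L} \frac{h_k^2\,(\mathrm{Tr}[P_k\rho])^2}{Q_k} = \|h\|_2^2 \sum_{k=1}^{L} (\mathrm{Tr}[P_k\rho])^2 = \|h\|_2^2\,\|\psi\|_P^2,$$
where the third equality substitutes $Q_k = h_k^2/\|h\|_2^2$ and the last uses the definition of $\|\psi\|_P^2$. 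Hence $\mathbb{V}[Y] \leq \|h\|_2^2\,\|\psi\|_P^2$.

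Finally, I would take $T$ independent copies $Y_1,\dots,Y_T$, form the empirical mean $S = \frac{1}{T}\sum_t Y_t$ (so that $\mathbb{E}[S]=\mathrm{Tr}[O\rho]$ and $\mathbb{V}[S]=\mathbb{V}[Y]/T$), and apply Chebyshev's inequality exactly as in Lemma~\ref{lemmaSamplingL1},
$$\mathbb{P}[\,|S-\mathrm{Tr}[O\rho]| \geq \|h\|_2\,\|\psi\|_P\,\epsilon\,] \leq \frac{\mathbb{V}[Y]}{T\,\|h\|_2^2\,\|\psi\|_P^2\,\epsilon^2} \leq \frac{1}{T\epsilon^2},$$
which yields a constant failure probability for $T \in \Ord{1/\epsilon^2}$, with exponential improvement available via median boosting as in the $\ell_1$ case. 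I do not expect a genuine obstacle: the statement is essentially a corollary of the cited $\ell_2$-sampling results. The only points requiring care are the variance calculation and the well-definedness of the $1/|h_k|$ estimator, both of which hinge on the precise form of the weights $Q_k$, and the observation that the target accuracy $\|h\|_2\,\|\psi\|_P\,\epsilon$ is natural because $|\mathrm{Tr}[O\rho]| \leq \|h\|_2\,\|\psi\|_P$ by Cauchy--Schwarz, so the guarantee is relative to this bound.
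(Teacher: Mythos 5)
Your proposal is correct and follows essentially the same route as the paper: your estimator $\mathrm{sign}(h_k)\,\mathrm{Tr}[P_k\rho]\,\|h\|_2^2/|h_k|$ is identical to the paper's $\mathrm{Tr}[P_k\rho]\,\|h\|_2^2/h_k$ (since the $h_k$ are real), and the unbiasedness computation, the second-moment bound $\mathbb{E}[Y^2]=\|h\|_2^2\,\|\psi\|_P^2$, and the Chebyshev step with $T\in\Ord{1/\epsilon^2}$ all match the paper's proof. Your added remarks on well-definedness when $h_k=0$ and on the Cauchy--Schwarz interpretation of the target accuracy are sensible refinements not present in the paper, but they do not change the argument.
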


\begin{proof}
The algorithm is as follows. Sample according to $Q_k$  an index $k$ and query the respective elements to form the random variable 
\be Y:=\frac{\mathrm{Tr}[P_k\rho]\Vert h\Vert^2 } {h_{k} }.
\ee 
The expectation value is
\be
\mathbb{E}[Y] = \sum_{k=1}^{L} Q_k \frac{\mathrm{Tr}[P_k\rho]   \Vert h\Vert^2 } {h_{k} } 
= \sum_{k=1}^{L} \frac{ \vert h_k
\vert^2}{\Vert h \Vert^2 } \frac{\mathrm{Tr}[P_k\rho]   \Vert h\Vert^2 } {h_{k} } 
= \mathrm{Tr}[O\rho].
\ee
The variance is bounded as
\be
\mathbb{V}[ Y ] \leq \sum_{k=1}^{L}  Q_k \left( \frac{\mathrm{Tr}[P_k\rho]   \Vert h\Vert^2 } {h_{k} }   \right)^2 = \sum_{k=1}^{L}  \frac{ \vert h_k \vert^2}{\Vert h \Vert^2 }  \left( \frac{\mathrm{Tr}[P_k\rho]   \Vert h\Vert^2 } {h_{k} }   \right)^2 = \Vert  \psi \Vert_P^2  \Vert h \Vert^2.
\ee
Performing this $T$ times and taking the mean 
leads to the random variable 
\be
S= \frac{1}{T} \sum_{t=1}^T Y_t, 
\ee
with expectation value $ \mathrm{Tr}[O\rho]$ and variance $V[T] = V[Y]/t$.
Now estimate the failure probability to achieve an additive error $ \Vert  \psi \Vert_P  \Vert h \Vert \epsilon$ via Chebyshev's inequality to be
\be
\mathbb{P}\left [ \left \vert T - \mathbb{E}[T] \right \vert \geq  \Vert  \psi \Vert_P  \Vert h \Vert \epsilon\right] \leq \frac{ \mathbb{V}[T] }{ \Vert  \psi \Vert_P^2  \Vert h \Vert^2 \epsilon^2} \leq \frac{1}{\epsilon^2 t}.
\ee
For a constant failure probability (say $0.01$), we obtain the claim $T = \Ord{100/\epsilon^2}$. 
\end{proof}

\subsection{$\ell_2$-sampling operator shadow}
\label{proofL2}

\begin{theorem}[$\ell_2$-sampling operator shadow] \label{theoremL2}
Assume sampling access as in Definition \ref{defSampling}. Let there be given multiple copies of a quantum state $\rho$. Let $\epsilon>0$.
Then we can estimate 
$\mathrm{Tr}[O\rho]$ to additive accuracy $\epsilon$ in $ \Ord{L\|h\|_2^2/\epsilon^2}$ measurements with constant success probability.
\end{theorem}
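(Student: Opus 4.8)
The plan is to mirror the argument of Theorem \ref{theoremL1}, replacing the $\ell_1$-sampling estimator with the $\ell_2$-sampling estimator of Lemma \ref{lemmaDenseQuery}, but substituting the idealized query to $\mathrm{Tr}[P_k\rho]$ by an actual finite-statistics measurement. Concretely, I would sample an index $k \in [L]$ according to $Q_k = |h_k|^2/\|h\|_2^2$, measure the sampled Pauli $P_k$ on a constant number of copies of $\rho$ to form the Bernoulli statistic $Z_k$ of Lemma \ref{lemmaPauliMeasurement}, and define the random variable
\begin{align}
\tilde{Y} := \frac{(2Z_k-1)\,\|h\|_2^2}{h_k}.
\end{align}
Note that $Q_k$ assigns zero weight to any index with $h_k = 0$, so the division is always well-defined; the value $h_k$ (with sign) and the norm $\|h\|_2^2$ are supplied by the same sampling data structure of Section \ref{sec.construct}.

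Next I would verify unbiasedness. Since $\mathbb{E}[Z_k] = p_k = (1+\mathrm{Tr}[P_k\rho])/2$, we have $2\mathbb{E}[Z_k]-1 = \mathrm{Tr}[P_k\rho]$, and using $|h_k|^2/h_k = h_k$ for real $h_k$ the sampling weight cancels the denominator exactly,
\begin{align}
\mathbb{E}[\tilde{Y}] = \sum_{k=1}^L Q_k \frac{(2\mathbb{E}[Z_k]-1)\,\|h\|_2^2}{h_k} = \sum_{k=1}^L h_k \,\mathrm{Tr}[P_k\rho] = \mathrm{Tr}[O\rho].
\end{align}

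The crucial step is the variance bound, and here the finite-measurement version is actually cleaner than the perfect-query case. Because $Z_k$ is the mean of $\{0,1\}$-valued outcomes we have $Z_k \in [0,1]$, so $(2Z_k-1)^2 \le 1$ holds deterministically; combined with the same $1/h_k^2$ cancellation against $|h_k|^2$ in $Q_k$,
\begin{align}
\mathbb{V}[\tilde{Y}] \le \mathbb{E}[\tilde{Y}^2] = \|h\|_2^2 \sum_{k=1}^L \mathbb{E}\big[(2Z_k-1)^2\big] \le L\,\|h\|_2^2.
\end{align}
This is just the slack $\|\psi\|_P^2 \le L$ already present in Lemma \ref{lemmaDenseQuery}. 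Finally, averaging $T$ independent copies $\tilde{S} = \frac{1}{T}\sum_t \tilde{Y}_t$ and invoking Chebyshev's inequality gives $\mathbb{P}[|\tilde{S}-\mathbb{E}[\tilde{S}]|\ge \epsilon] \le L\|h\|_2^2/(T\epsilon^2)$, so $T \in \mathcal{O}(L\|h\|_2^2/\epsilon^2)$ achieves a constant failure probability; median-of-means boosting then improves the dependence on the failure probability exponentially. As each repetition consumes only a constant number of measurements, the total measurement count is $\mathcal{O}(L\|h\|_2^2/\epsilon^2)$.

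I expect no serious obstacle here: the computation is routine. The only points requiring genuine care are that the estimator is well-defined (guaranteed because $Q_k$ vanishes whenever $h_k=0$), and that the crude deterministic bound $(2Z_k-1)^2\le 1$ is precisely what pins the variance to $L\|h\|_2^2$ rather than to the tighter $\|\psi\|_P^2\|h\|_2^2$ available with perfect queries — a harmless loss, since the stated complexity already carries the factor $L$.
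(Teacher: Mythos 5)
Your proposal is correct and follows essentially the same route as the paper's proof: the same estimator $\tilde Y = (2Z_k-1)\|h\|_2^2/h_k$, the same unbiasedness computation, and the same Chebyshev-plus-averaging argument. The one difference is local and favorable to you: the deterministic bound $(2Z_k-1)^2\le 1$ replaces the paper's explicit evaluation of $\mathbb{E}[Z_k^2]$ combined with the splitting $(a-b)^2\le a^2+b^2$, giving the slightly tighter and cleaner variance bound $L\|h\|_2^2$ in place of $\left(L+4+\tfrac{4}{M}\right)\|h\|_2^2$ and removing any constraint on the per-sample measurement count $M$.
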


\begin{proof}
The algorithm is given in Algorithm.~\ref{algorithmL2}.
Again sample with probability $Q_k$ an index $k$ and form the random variable of the measurement outcomes for that $k$, i.e., 
\be
\tilde Y:=  \left ( 2Z_k -1 \right) \frac{ \Vert h \Vert_2^2} {h_{k} }
\ee  see  Lemma \ref{lemmaPauliMeasurement}. For fixed $k$, the expectation value is simply 
$\mathbb{E}_k[\tilde Y] = \mathrm{Tr}[P_k\rho]  \frac{ \Vert h \Vert_2^2} {h_{k} }$ and the variance is 
\be
\mathbb{V}_k[\tilde Y] \leq \frac{ 4 \Vert h \Vert_2^4} {h_{k}^2 } \mathbb{E}_k \left [Z_k^2\right] = \frac{ 4\Vert h\Vert_2^4 }{h_{k}^2}   \left(\frac{1+\mathrm{Tr}[P_i\rho]}{2M}+\left(1-\frac{1}{M}\right)\left(\frac{1+\mathrm{Tr}[P_i\rho]}{2}\right)^2\right).
\ee 
For stochastic $k$ we take another expectation value
\be
\mathbb{E}[\tilde Y] =  \sum_k Q_k \mathrm{Tr}[P_k\rho] \frac{ \Vert h \Vert_2^2} {h_{k} } = \mathrm{Tr}[O\rho],
\ee
and the total variance can be bounded as 
\be
\mathbb{V}[ \tilde Y  ] &\leq&  \Vert h \Vert_2^4  \left ( 4 \mathbb{E} \left [Z_k^2 \right ]+ \mathbb{E} \left [ \left (\frac {1}{h_{k} } \right )^2\right]  \right)\\
&=&   \Vert h \Vert_2^4 \left( 4 \sum_{k=1}^L Q_k \left(\frac{1+\mathrm{Tr}[P_i\rho]}{2M}+\left(1-\frac{1}{M}\right)\left(\frac{1+\mathrm{Tr}[P_i\rho]}{2}\right)^2\right) + \frac{L}{ \Vert h \Vert^2} \right)\\
&=& \frac{2 \Vert h \Vert^2}{M} \left(1+ \mathrm{Tr}[P_k\rho]\right) + (L+4) \Vert h \Vert^2 \\
&\leq&  \frac{4\Vert h \Vert^2}{M}  + (L+4)\Vert h \Vert^2
\ee 
Now estimate the failure probability to achieve an additive error $\epsilon$ for the mean
\be
\tilde S : = \frac{1}{T} \sum_{t=1}^T \tilde Y_t, 
\ee
via Chebyshev's inequality to be
\be
\mathbb{P}\left [ \left \vert \tilde T - \mathbb{E}[\tilde T] \right \vert \geq \epsilon\right] \leq \frac{ \mathbb{V}[\tilde T] }{\epsilon^2} =\frac{\|h\|^2}{t \epsilon^2}\left( \frac{4}{M}  + 4+L \right).
\ee
For a constant failure probability, we obtain the claim $M\geq 4$ and $T = \Ord{L\|h\|^2/\epsilon^2}$.
\end{proof}

\begin{figure}[htbp]
\begin{algorithm}[H]
\caption{$\ell_2$-sampling operator shadow\label{algorithmL2}}
\renewcommand{\algorithmicrequire}{\textbf{Input:}}
\renewcommand{\algorithmicensure}{\textbf{Output:}}
\begin{algorithmic}[1]
\Require{Query and $\ell_2$-sampling access for the Pauli coefficients of an observable $O$, copies of an unknown quantum state $\rho$}
\For{$t=1$ to $T$}
\State Sample an index $j_t\in [L]$ with the $\ell_2$-sampling access
\State Implement the corresponding Pauli operator $P_{j_t}$ to the state $\rho$ on the quantum device and measure in the computational basis with constant (e.g., $4$) times.
\State Compute the mean of the measurement outcomes as in Lemma \ref{lemmaPauliMeasurement} and store it as $Z_{j_t}$.
\State Compute $\tilde{Y}_t:=\left(2Z_{j_t}-1\right)\|h\|_2^2/h_{j_t}$ as in Theorem \ref{theoremL2}
\EndFor
\Ensure{$\tilde{S}=\frac{1}{T} \sum_{t=1}^T \tilde{Y}_i$}
\end{algorithmic}
\end{algorithm}
\caption{Algorithm of $\ell_2$-sampling operator shadow}
\end{figure}

\end{document}